\newcommand{\lamcc}{\textbf{CC}_{(\lambda)}}
\newcommand{\lamstc}{\textbf{STC}_{(\lambda)}}
\newcommand{\minstc}{\textsc{minSTC}}
\newcommand{\minstcp}{\textsc{minSTC+}}
\newtheorem{theorem}{Theorem}
\numberwithin{theorem}{section}
\newtheorem{corollary}[theorem]{Corollary}
\numberwithin{observation}{section}
\numberwithin{definition}{section}
\numberwithin{conjecture}{section}
\begin{document}

%%
%% The "title" command has an optional parameter,
%% allowing the author to define a "short title" to be used in page headers.
\title{Faster Approximation Algorithms for Parameterized Graph Clustering and Edge Labeling}

%\author{Anonymous Authors}

%% The "author" command and its associated commands are used to define
%% the authors and their affiliations.
%% Of note is the shared affiliation of the first two authors, and the
%% "authornote" and "authornotemark" commands
%% used to denote shared contribution to the research.

\author{Vedangi Bengali}
\email{vedangibengali@tamu.edu}
\affiliation{%
	\institution{Texas A\&M University}
	\city{College Station}
	\state{Texas}
	\country{USA}
}

\author{Nate Veldt}
\email{nveldt@tamu.edu}
\affiliation{%
	\institution{Texas A\&M University}
	\city{College Station}
	\state{Texas}
	\country{USA}
}

%%
%% By default, the full list of authors will be used in the page
%% headers. Often, this list is too long, and will overlap
%% other information printed in the page headers. This command allows
%% the author to define a more concise list
%% of authors' names for this purpose.
\renewcommand{\shortauthors}{Vedangi Bengali and Nate Veldt}

\begin{abstract}
      Graph clustering is a fundamental task in network analysis where the goal is to detect sets of nodes that are well-connected to each other but sparsely connected to the rest of the graph. We present faster approximation algorithms for an NP-hard parameterized clustering framework called LambdaCC, which is governed by a tunable resolution parameter and generalizes many other clustering objectives such as modularity, sparsest cut, and cluster deletion. Previous LambdaCC algorithms are either heuristics with no approximation guarantees, or computationally expensive approximation algorithms. We provide fast new approximation algorithms that can be made purely combinatorial. These rely on a new parameterized edge labeling problem we introduce that generalizes previous edge labeling problems that are based on the principle of strong triadic closure and are of independent interest in social network analysis. Our methods are orders of magnitude more scalable than previous approximation algorithms and our lower bounds allow us to obtain a posteriori approximation guarantees for previous heuristics that have no approximation guarantees of their own.
      %   These lead to new types of lower bounds for LambdaCC, which we round into provably good clusterings. 
\end{abstract}

%%
%% The code below is generated by the tool at http://dl.acm.org/ccs.cfm.
%% Please copy and paste the code instead of the example below.
%%
%\begin{CCSXML}
%<ccs2012>
% <concept>
%  <concept_id>10010520.10010553.10010554</concept_id>
%  <concept_desc>Computer systems organization~Robotics</concept_desc>
%  <concept_significance>100</concept_significance>
% </concept>
% <concept>
%  <concept_id>10003033.10003083.10003095</concept_id>
%  <concept_desc>Networks~Network reliability</concept_desc>
%  <concept_significance>100</concept_significance>
% </concept>
%</ccs2012>
%\end{CCSXML}
%\ccsdesc[500]{Computer systems organization~Embedded systems}
%\ccsdesc[100]{Networks~Network reliability}

\keywords{graph clustering, strong triadic closure, correlation clustering, approximation algorithms}

\maketitle

\section{Introduction}
% (Paragraph 1: graph clustering basics)
% *Graph clustering definition
% *Examples of applications
% *Typically cast as an optimization problem, often tied to statistical assumptions but not necessarily.
% *All strike a balance between two key goals, internal density and external sparsity.
% * Resolution parameter methods strike a balance between these two.

In network analysis, graph clustering is the task of partitioning a graph into well-connected sets of nodes (called communities, clusters, or modules), that are more densely connected to each other than they are to the rest of the graph~\cite{fortunato2016communitydetection,schaeffer2007graphclustering,porter2009communities}. 
%This task is also commonly known as \emph{community detection}, especially in social network analysis. 
This fundamental task has widespread applications across numerous domains, including detecting related genes in biological networks~\cite{shamir2004cluster,Ben-DorShamirYakhini1999}, finding communities in social networks~\cite{veldt2018correlation,newman2004modularity}, and image segmentation~\cite{ShiMalik2000}, to name only a few. A standard approach for finding clusters in a graph is to optimize some type of combinatorial objective function that encodes the quality of a clustering of nodes. Just as there are many different applications and reasons why one may wish to partition the nodes of a graph into clusters, there are many different types of objective functions for graph clustering~\cite{newman2004modularity,ShiMalik2000,shamir2004cluster,bohlin2014community,DelvenneYalirakiBarahona2010}, all of which strike a different balance between the goal of making clusters dense internally and the goal of ensuring that few edges cross cluster boundaries. In order to capture many different notions of community structure within the same framework, many graph clustering optimization objectives come with tunable \emph{resolution parameters}~\cite{schaub2012encoding,Veldt2019learning,ReichardtBornholdt2006,DelvenneYalirakiBarahona2010,newman2016equivalence}, which control the tradeoff between the internal edge density and the inter-cluster edge density resulting from optimizing the objective.

%(Paragraph 2: the research gap, existing limitation)
%
%Been studied for a long time, but there is still a large theory-practice gap.
%
%The objectives are NP-hard, so while objectives are often easy to define, actually developing algorithms that are guaranteed to solve those objectives well or find something that is near optimal is very challenging.
%
%Many algorithms are purely heuristics, and come with not approximation guarantees.
%
%On the other hand in the theory literature, there are many approximation algorithms for graph clustering objectives, producing a solution that comes within a multiplicative factor of optimal. However, in most circumstances, these are impractical, as they often rely on solving expensive convex relaxations (e.g., linear programming or semi-definite programming relaxations of NP-hard objectives).

One of the biggest challenges in graph clustering is that the vast majority of clustering objectives are NP-hard. Thus, while it is often easy to define a new way to measure clustering structure, it is very hard to find optimal (or even certifiably near-optimal) clusters in practice for any given objective. There has been extensive theoretical research on approximation algorithms for different clustering objectives~\cite{arora2009scut,LeightonRao1999,veldt2018correlation,CharikarGuruswamiWirth2003}, but most of these come with high computational costs and memory constraints, often because they rely on expensive convex relaxations 
%(e.g., a linear programming or semidefinite programming relaxation) 
of the NP-hard clustering objective. On the other hand, scalable graph clustering algorithms have been designed based on local node moves and greedy heuristics~\cite{newman2004modularity,newman2006modularity,blondel2008fast,traag2019louvain,veldt2018correlation,shi2021scalable}, but these come with no theoretical approximation guarantees. As a result, it can be challenging to tell whether the structure of an output clustering depends more on the underlying objective function or on the mechanisms of the algorithm being used.

%(Paragraph 3: our focus on LambdaCC)
This paper focuses on an existing optimization graph clustering framework called LambdaCC~\cite{veldt2018correlation,Veldt2018ccgen,shi2021scalable,gan2020graph}, which comes with two key benefits. The first is that it can detect different types of community structures by tuning a resolution parameter $\lambda \in (0,1)$. Many existing clustering objectives can be recovered as special cases for specific choices of $\lambda$~\cite{veldt2018correlation}. The second benefit is that LambdaCC can be viewed as a special case of correlation clustering~\cite{BansalBlumChawla2004}, a framework for clustering based on similarity and dissimilarity scores, that has been studied extensively from the perspective of approximation algorithms~\cite{CharikarGuruswamiWirth2005,DemaineEmanuelFiatEtAl2006,Veldt2018ccgen}. As a result, LambdaCC directly inherits an $O(\log n)$ approximation algorithm that holds for any correlation clustering problem~\cite{CharikarGuruswamiWirth2005,DemaineEmanuelFiatEtAl2006} and is amenable to even better approximation guarantees in some parameter regimes. Gleich et al.~\cite{Veldt2018ccgen} showed that for very small values of $\lambda$, the $O(\log n)$ approximation is the best that can be achieved by rounding a linear programming relaxation (the most successful approach known for approximating the objective). However, a 3-approximation algorithm has been developed for the regime where $\lambda \geq 1/2$~\cite{veldt2018correlation}. Despite these results, LambdaCC suffers from a similar theory-practice gap as many other clustering frameworks. These previous approximation algorithms rely on expensive linear programming relaxations and are therefore not scalable. While faster heuristic algorithms do exist~\cite{veldt2018correlation,shi2021scalable}, these come with no approximation guarantees. 

%In our work we focus on the LambdaCC framework. One benefit of this objective is that it generalizes many other objectives. At the same time, it is amenable to approximation algorithms. This is because it's a special case of correlation clustering, a framework for clustering that has been studied extensively from the perspective of approximation algorithms. For any value of $\lambda \in (0,1)$, it has an $O(\log n)$ approximation. For $\lambda \geq 1/2$, an improved 3-approximation is possible. However, this framework suffers from the same theory practice gap that other clustering frameworks do. The LP-based approximations are very expensive, and while 

%(Paragraph 4: Our contributions)

\paragraph{The present work: fast approximation algorithms for parameterized graph clustering.}
We develop algorithms for LambdaCC that come with rigorous approximation guarantees and are also far more scalable than existing approximation algorithms for this problem. We present new algorithms for all values of the parameter $\lambda \in (0,1)$, focusing especially on the regime $\lambda \in (\frac12,1)$, since constant factor approximations are possible in this regime and have been a focus in previous research. This is also the regime where LambdaCC interpolates between two existing objectives known as cluster editing and cluster deletion~\cite{shamir2004cluster}. We first of all design a fast combinatorial approximation algorithm that returns a 6-approximation for any value of $\lambda \in (\frac12, 1)$, that runs in only $O(\sum d_v^2)$ time, where $d_v$ is the degree of node $v$. While this is a factor of 2 worse than the best existing 3-approximation, it is orders of magnitude faster than this previous approach, which requires solving an LP relaxation with $O(n^3)$ variables for an $n$-node graph and takes $\Omega(n^6)$ time. Our second algorithm is an improved $7-{2}/{\lambda}$ approximation for $\lambda \geq \frac12$ (which ranges from 3 to 5 as $\lambda \rightarrow 1$) based on rounding an LP relaxation with far fewer constraints. In numerical experiments, we confirm for a large collection of real-world networks that the number of constraints in this cheaper LP tends to be orders of magnitude smaller than the $O(n^3)$ constraint set of the canonical LP relaxation. It can also be run on graphs that are so large that even forming the $O(n^3)$ constraint matrix for the canonical LP relaxation leads to memory issues. Even more significantly, this cheaper LP that we consider is a \emph{covering} LP, a special type of LP that can be solved using combinatorial algorithms based on the multiplicative weights update method~\cite{fleischer2004fast,quanrud2020nearly,garg2004fractional}. 

We also adapt our techniques to obtain a $(1 + 1/\lambda)$ approximation by rounding the cheaper LP when $\lambda < \frac12$. As is the case when rounding the righter and more expensive canonical LP relaxation, this gets increasingly worse as $\lambda$ decreases. This is not surprising, given that even the canonical LP relaxation has an $O(\log n)$ integrality gap~\cite{Veldt2018ccgen}. Our $(1+1/\lambda)$ approximation is in fact quite close to the previous $1/\lambda$ approximation for small $\lambda$ that was previously developed by Gleich et al.~\cite{Veldt2018ccgen} based on the canonical LP.

All of our approximation algorithms rely on a new connection between LambdaCC and an edge labeling problem that is based on the social network analysis principle of strong triadic closure~\cite{easley2010networks,sintos2014using,granovetter1973strength}. This principle posits that if two people share strong links to a mutual friend, then they are likely to share at least a weak connection with each other. This principle has inspired a line of research on strong triadic closure (STC) labeling problems~\cite{sintos2014using,oettershagen2023inferring,veldt2022stc,gruttemeier2020relation,gruttemeierstrong}, which label edges in a graph as weak or strong (or in some cases add ``missing" edges) in order to satisfy the strong triadic closure property. Previous research has shown that unweighted variants of this labeling problem are related to cluster editing and cluster deletion~\cite{gruttemeier2020relation,gruttemeierstrong} (special cases of LambdaCC when $\lambda = 1/2$ and $\lambda \approx 1$ respectively). Recently it was shown that lower bounds and algorithms for these unweighted STC problems can be useful tools in designing faster approximation algorithms for cluster editing and cluster deletion~\cite{veldt2022stc}. We generalize this strategy by defining a new parameterized edge-labeling problem we call LambdaSTC, which provides new types of lower bounds for LambdaCC. We also provide a 3-approximation algorithm for LambdaSTC that applies for every value of $\lambda \in (0,1)$. All of these constitute new results for an edge labeling problem of independent interest in social network analysis, but our primary motivation is to use them to develop faster clustering approximation algorithms.

We demonstrate in numerical experiments that our algorithms are fast and effective, far surpassing their theoretical guarantees. In our experiments, we even find that solving our cheaper LP relaxation actually tends to return a solution that can quickly be certified to be the optimal solution for the more expensive canonical LP relaxation for LambdaCC. When this happens, we can use previous rounding techniques that guarantee a 3-approximation for $\lambda \geq \frac12$.

%To summarize, we have the following contributions:
%\begin{itemize}
%\item  6 approximation
%\item  $7- 2/\lambda$ approximation
%\item  We introduce a new edge labeling problem called LambdaSTC which generalizes previous edge labeling problems, along with a 3-approximation guarantee
%\item  In numerical experiments, we show that these approaches are orders of magnitude faster than previous techniques and far exceed their theoretical guarantees.
%\end{itemize}

\section{Preliminaries and Related Work}
We begin with technical preliminaries on graph clustering, correlation clustering, and strong triadic closure edge labeling problems.

\subsection{The LambdaCC Framework}
Given an undirected graph $G = (V,E)$ the high-level goal of a graph clustering algorithm is to partition the node set $V$ into disjoint clusters in such a way that many edges are contained inside clusters, and few edges cross between clusters. These two goals are often in competition with each other, and there have been many different approaches for defining and forming clusters, all of which implicitly strike a different type of tradeoff between these goals. The LambdaCC clustering objective~\cite{veldt2018correlation} provides one approach for implicitly controlling this tradeoff using a \emph{resolution} parameter $\lambda \in (0,1)$. Formally, given $G = (V, E)$ and parameter $\lambda \in (0,1)$, LambdaCC seeks a clustering that minimizes the following objective
\begin{equation}
\label{lambdacc}
	\min_{\delta} \quad \sum_{(i,j)\in E}(1-\lambda)(1-\delta_{ij}) + \sum_{(i,j)\notin E}\lambda\delta_{ij},
\end{equation}
 where $\delta_{ij}$ is a binary cluster indicator for every node-pair, i.e., $\delta_{ij}=1$ if $i$ and $j$ are clustered together and $0$ otherwise. The number of clusters to form is not specified. Rather, the optimal number of clusters is controlled implicitly by tuning $\lambda$. Observe that the two pieces of the LambdaCC objective directly correspond to the two goals of graph clustering: the term $(1-\lambda)(1-\delta_{ij})$ for $(i,j) \in E$ is a penalty incurred if $i$ and $j$ are separated, and the term $\lambda \delta_{ij}$ for $(i,j) \notin E$ places a penalty on putting $i$ and $j$ together if they do not share an edge. The relative importance of the two competing goals of graph clustering (form clusters that are internally dense and externally sparse) is then controlled by tuning $\lambda$. Smaller values of $\lambda$ tend to produce a smaller number of (larger) clusters, and choosing large $\lambda$ leads to a larger number of (smaller) clusters. 

One of the benefits of the LambdaCC framework is that it generalizes and unifies a number of previously studied objectives for graph clustering, including the sparsest cut objective~\cite{arora2009scut}, cluster editing~\cite{shamir2004cluster, BansalBlumChawla2004}, and cluster deletion~\cite{shamir2004cluster}. It also is equivalent to the popular modularity clustering objective~\cite{newman2004modularity} under certain conditions. The definition of modularity depends on underlying null distributions for graphs. When the Erd\H{o}s-R\'{e}nyi null model is used, modularity is equivalent to Objective~\eqref{lambdacc} for an appropriate choice of $\lambda$. When the Chung-Lu null model is used, modularity can be viewed as a special case of a degree-weighted version of LambdaCC~\cite{veldt2018correlation}, though we focus on Objective~\eqref{lambdacc} in this paper. Finally, for an appropriate choice of $\lambda$, the LambdaCC objective is equivalent to graph clustering based on maximum likelihood inference for the popular stochastic block model~\cite{abbe2018community}, which can be seen from LambdaCC's relationship to modularity~\cite{newman2016equivalence}.

\subsection{Correlation Clustering}
The CC in LambdaCC stands for correlation clustering~\cite{BansalBlumChawla2004}, a framework for clustering based on pairwise similarity and dissimilarity scores. In the most general setting, an instance of weighted correlation clustering is given by a set of vertices $V$, along with two non-negative weights $(w_{ij}^+,w_{ij}^-)$ for each pair of distinct vertices $i,j \in V$. If nodes $i$ and $j$ are placed in the same cluster, this incurs a \emph{disagreement} penalty of $w_{ij}^-$ whereas if they are separated, a \emph{disagreement} penalty of $w_{ij}^+$ is imposed. In correlation clustering, disagreements are also called \emph{mistakes}. This terminology is especially natural when only one of the weights $(w_{ij}^+, w_{ij}^-)$ is positive and the other is zero (which is true for the most widely-studied special cases). In this case, each node pair $(i,j)$ is either ``similar'' ($w_{ij}^+ > 0$) and wants to be clustered together, or ``dissimilar'' ($w_{ij}^- > 0$) and wants to be clustered apart. A \emph{mistake} or \emph{disagreement} happens precisely when nodes are clustered in a way that does not match this ``preference.'' Formally, the \emph{disagreements minimization} objective for correlation clustering can be represented as the following integer linear program (ILP):
\begin{equation}
	\begin{aligned} \label{eq:cc_ilp}
		\text{ min}\quad &  {\textstyle \sum_{i,j}w_{ij}^+x_{ij} + w_{ij}^-(1-x_{ij})}\\
		\text{subject to} \quad &  x_{ij}\leq x_{ik}+x_{jk} \text{ for all } i,j,k\\
		& x_{ij}\in \{0,1\} \text{ for all } i,j   
	\end{aligned}
\end{equation}
where $x_{ij}$ is a binary distance variable between nodes $i,j$, i.e., $x_{ij} = 0$ means nodes $i$ and $j$ are clustered together, and $x_{ij} = 1$ means they are separated. The most well-studied special case is when $(w_{ij}^+, w_{ij}^-) \in \{(1,0), (0,1)\}$. This is known as complete unweighted correlation clustering, as it is often viewed as a clustering objective on a complete \emph{signed} graph where each pair of nodes either defines a positive edge or a negative edge. This is equivalent to the cluster editing problem~\cite{shamir2004cluster}, which seeks to add or delete the minimum number of edges in an unsigned graph $G = (V,E)$ to partition it into a disjoint set of cliques. This is in turn related to cluster \emph{deletion}, where one can only \emph{delete} edges in $G$ in order to partition it into cliques. Cluster deletion is the same as solving Objective~\eqref{eq:cc_ilp} when $(w_{ij}^+, w_{ij}^-) \in \{(1,0), (0,\infty)\}$ for every pair of nodes $(i,j)$.

\textbf{Approximation algorithms.} Correlation clustering is NP-hard even for special cases of cluster editing and cluster deletion, but many approximation algorithms have been designed~\cite{AilonCharikarNewman2008,CharikarGuruswamiWirth2005,ChawlaMakarychevSchrammEtAl2015}. Most of these algorithms rely on solving and rounding a linear programming (LP) relaxation of ILP~\eqref{eq:cc_ilp}, obtained by replacing $x_{ij} \in \{0,1\}$ with the constraint $x_{ij} \in [0,1]$. For the general weighted case, the best approximation guarantee is $O(\log n)$, which matches the integrality gap of the linear program~\cite{CharikarGuruswamiWirth2005,DemaineEmanuelFiatEtAl2006}. However, constant factor approximations are possible for certain weighted cases~\cite{AilonCharikarNewman2008,veldt2020parameterized,Ailon2011bcc}. Ailon et al.~\cite{AilonCharikarNewman2008} designed a fast randomized combinatorial algorithm called \textsc{Pivot} for the complete unweighted case. This algorithm repeatedly selects a uniform random pivot node in each iteration and clusters it together with its unclustered neighboring nodes that share a positive edge. This algorithm comes with an expected 3-approximation guarantee. However, for general weighted correlation clustering, it can produce poor results.

\textbf{Deterministic pivot.} A derandomized version of the standard \textsc{Pivot} algorithm was developed by van Zuylen and Williamson~\cite{vanzuylen2009deterministic}, which can be applied to a broader class of weighted correlation clustering problems.
Instead of randomly choosing pivot nodes, this technique relies on solving the LP relaxation of correlation clustering, constructing a derived graph $\hat{G}$ based on the solution to this LP, and then running a pivoting procedure in $\hat{G}$ that deterministically selects pivot nodes based on the LP output. They showed that this can produce a deterministic 3-approximation algorithm for the complete unweighted case, and can also be applied to other weighted cases including the case of probability constraints (where $w_{ij}^+ + w_{ij}^- = 1$ for every pair $\{i,j\}$). In proving these results, van Zuylen and Williamson presented a useful theorem (Theorem 3.1 in~\cite{vanzuylen2009deterministic}) that can be used as a general strategy for developing approximation algorithms for other special weighted variants. We state a version of this theorem below, as it will be a useful step for some of our results. The original theorem includes details for choosing pivot nodes deterministically. The approximation holds in expectation when choosing pivot nodes uniformly at random.
\begin{theorem}\label{thm_2.1}
 Consider an instance of weighted correlation clustering given by a node-set $V$ and weights $\{w^+_{ij},w^-_{ij}\}_{i\neq j}$. Let $b_{ij}$ represent a budget for node pair $\{i,j\}$ where $i\neq j$, and $\hat{G}= (V,\hat{E})$ be a graph which for $\alpha>0$ satisfies the following conditions:
	\begin{enumerate} 
		\item $w_{ij}^- \leq \alpha b_{ij}$ for all edges $(i,j) \in \hat{E}$, and \\
		$w_{ij}^+ \leq \alpha b_{ij}$ for all edges $(i,j) \notin \hat{E}$,
		\item $w_{ij}^+ + w_{jk}^+ + w_{ik}^- \leq \alpha (b_{ij}+b_{jk}+b_{ik})$ for every triplet $\{i,j,k\}$ in $\hat{G}$ where $\{i,j\}\in \hat{E}, \{j,k\}\in \hat{E}, \{i,k\}\notin \hat{E}$.
	\end{enumerate}
	Applying \textsc{Pivot} to $\hat{G}$ will return a clustering solution with an expected weight of disagreements bounded above by $\alpha\sum_{i<j}b_{ij}$.
\end{theorem}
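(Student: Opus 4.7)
The plan is to carry out the classical randomized Pivot analysis, now instantiated with the abstract budgets $b_{ij}$ and the two hypotheses. Running Pivot on $\hat{G}$ repeatedly selects an active node $k$ uniformly at random and clusters $k$ with its active $\hat{G}$-neighbors; the goal is to bound the resulting expected disagreement weight by $\alpha\sum_{i<j}b_{ij}$.

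The first step is to split the disagreement on each pair $\{p,q\}$ into \emph{trivial} and \emph{nontrivial} components. A trivial disagreement is a $w^-_{pq}$ penalty on an $\hat{E}$-edge that ends up inside a cluster, or a $w^+_{pq}$ penalty on a non-$\hat{E}$ pair that ends up separated; condition~(1) bounds each such contribution by $\alpha b_{pq}$. A nontrivial disagreement is the reverse, and it can arise only through a bad triangle in $\hat{G}$: a non-edge $\{p,q\}$ ends up merged only when a pivot $k$ with $(p,k),(q,k)\in\hat{E}$ is selected with $p,q$ both active, and an edge $\{p,q\}\in\hat{E}$ ends up separated only when a pivot $k$ with exactly one of $(p,k),(q,k)$ in $\hat{E}$ is selected with $p,q$ both active.

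For the core step I would fix a bad triangle $T=\{i,j,k\}$ with edges $(i,j),(j,k)\in\hat{E}$ and non-edge $(i,k)\notin\hat{E}$, and let $A_T$ denote the event that some vertex of $T$ is selected as pivot at a step when all three of $T$ remain active. Conditional on $A_T$, uniformity of Pivot's random choice makes each of $i,j,k$ the first pivot among $T$ with probability $\tfrac{1}{3}$, and the nontrivial contributions to the total mistake weight are $w^-_{ik}$ (if $j$ is first), $w^+_{jk}$ (if $i$ is first), and $w^+_{ij}$ (if $k$ is first). Hence the expected nontrivial cost attributed to $T$ equals $\tfrac{\Pr[A_T]}{3}(w^+_{ij}+w^+_{jk}+w^-_{ik})$, which by condition~(2) is at most $\tfrac{\alpha\Pr[A_T]}{3}(b_{ij}+b_{jk}+b_{ik})$.

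Finally, I would sum the trivial per-pair bounds and the per-triangle nontrivial bounds and interchange summations to conclude that the expected total disagreement weight is at most
\[ \alpha\sum_{\{p,q\}} b_{pq}\,\Bigl(\Pr[\mathrm{triv}(p,q)] \;+\; \tfrac{1}{3}\sum_{T\ni\{p,q\},\,T\text{ bad}}\Pr[A_T]\Bigr). \]
The main obstacle is showing that the bracketed expression is exactly $1$ for every pair. This reduces to a bijection: each nontrivial disagreement on $\{p,q\}$ is attributed to a unique bad triangle (the one whose specific first-pivot event triggers the mistake), so the disjoint events ``$T$ attributes $\{p,q\}$'' each have probability $\tfrac{1}{3}\Pr[A_T]$ and collectively sum to $\Pr[\mathrm{nontriv}(p,q)]$. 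Combined with $\Pr[\mathrm{triv}(p,q)]+\Pr[\mathrm{nontriv}(p,q)]=1$, this collapses the bracket to $1$ and yields the desired $\alpha\sum_{i<j}b_{ij}$ bound.
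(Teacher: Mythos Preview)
The paper does not actually prove Theorem~\ref{thm_2.1}; it is stated as a preliminary result quoted from van Zuylen and Williamson~\cite{vanzuylen2009deterministic} (their Theorem~3.1), with the remark that the expected-cost version holds under uniformly random pivots. Your proposal correctly reconstructs the standard randomized \textsc{Pivot} charging argument (in the style of Ailon--Charikar--Newman) that underlies that result: split each pair's cost into the part agreeing with $\hat{G}$ (bounded pairwise by condition~(1)) and the part disagreeing (charged to bad triangles via condition~(2)), invoke the uniform-first-pivot property on each bad triangle, and use the bijection between nontrivial mistakes on $\{p,q\}$ and first-pivot events in bad triangles through $\{p,q\}$ to collapse the per-pair coefficient to~$1$; this is exactly the argument one finds in the cited reference, and it is sound.
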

% As a key contribution, van Zuylen and Williamson~\cite{vanzuylen2009deterministic} provided a way to deterministically choose pivot nodes so that the bound in Theorem~\ref{thm_2.1} is guaranteed to hold. Choosing uniform random pivot nodes in $\hat{G}$ provides the same bound, but just in expectation.

\textbf{Approximations for LambdaCC.} The LambdaCC objective on a graph $G = (V,E)$ corresponds to a special case of Objective~\eqref{eq:cc_ilp} where $(w_{ij}^+,w_{ij}^-) = (1-\lambda,0)$ if $(i,j) \in E$ and $(w_{ij}^+,w_{ij}^-) = (0,\lambda)$ if $(i,j) \notin E$. Veldt et al.~\cite{veldt2018correlation} previously showed a 3-approximation algorithm for the case where $\lambda \geq 1/2$, based on rounding the standard correlation clustering LP relaxation.
%\begin{equation}
%	\begin{aligned} \label{eq:lamcc_ilp}
%		\text{ min}\quad &  \sum_{(i,j)\in E}(1-\lambda)x_{ij} + \sum_{(i,j)\notin E}\lambda(1-x_{ij})\\
%		\text{s.t.} \quad &  x_{ij}\leq x_{ik}+x_{jk} \text{ for all } i,j,k\\
%		& x_{ij}\in \{0,1\} \text{ for all } i,j   
%	\end{aligned}
%\end{equation}
However, because this LP has $O(n^3)$ constraints for an $n$-node graph, in practice it is challenging to solve it for graphs with even a thousand nodes. Later, Gleich et al.~\cite{Veldt2018ccgen} proved that the LP integrality gap can be $O(\log n)$ for small values of $\lambda$. They also developed approximation guarantees for smaller values $\lambda$, but these get increasingly worse as $\lambda \rightarrow 0$. Faster heuristics algorithms for LambdaCC have also been developed~\cite{veldt2018correlation,shi2021scalable}, but these come with no approximation guarantees. Thus, a limitation of this previous work is that existing LambdaCC algorithms either depend on an extremely expensive linear programming relaxation or come with no guarantees. The focus of our paper is to bridge this gap.

\subsection{Strong Triadic Closure and Edge Labeling}
In social network analysis, the principle of strong triadic closure \cite{easley2010networks,granovetter1973strength} posits that two individuals in a social network will share at least a weak connection if they both share strong connections to a common friend. This has been used to define certain types of edge labeling problems where the goal is to label edges in a graph $G = (V,E)$ in such a way that this principle holds~\cite{sintos2014using,oettershagen2023inferring, gruttemeier2020relation,gruttemeierstrong,adriaens2020relaxing}. 

Given a graph $G = (V,E)$ (which could represent an observed set of social interactions), a triplet of vertices $(i,j,k)$ is an \textbf{open wedge} centered at $j$ if the vertex pairs $(i,j)$ and $(j,k)$ are edges (i.e., in $E$) while $(i,k)$ is not. The strong triadic closure principle suggests that if such an open wedge exists, then either $(i,j)$ or $(j,k)$ is a \emph{weak} edge, or else $(i,k)$ is a \emph{missing connection} that should appear as an edge in $G$ but was not observed when $G$ was constructed. With this principle in mind, Sintos and Tsaparas~\cite{sintos2014using} defined the strong triadic closure labeling problem (\textsc{minSTC}), where the goal is to label edges as \emph{weak} and \emph{strong} so that every open wedge contains as least one \emph{weak} edge, and in such a way that the number of edges labeled \emph{weak} is as small as possible. They showed that the problem is NP-hard but has a 2-approximation algorithm based on reduction to the \textsc{Vertex Cover} problem. They also considered a variation of the problem that allows for edge additions (\textsc{minSTC+}).

In our paper, we use $\mathcal{W}_j$ to denote the set of open wedges centered at $j$ in $G$, and let $\mathcal{W} = \cup_{j \in V} \mathcal{W}_j$. We use the term STC-labeling to indicate labeling of node pairs $(i,j) \in {V \choose 2}$ that satisfies the strong triadic closure in the following sense: for every open wedge $(i,j,k)$ centered at $j$, at least one of the edges $\{(i,j), (j,k)\}$ is labeled \emph{weak}, \emph{or} the non-edge $(i,j) \notin E$ is labeled as a \emph{missing edge}. Such labeling is encoded by a collection of weak edges denoted as $E_\mathit{weak}\subseteq E$, as well as a set of missing edges denoted as $E_\mathit{miss}\subseteq ({V \choose 2}-E)$. 
%The \textsc{minSTC} problem~\cite{sintos2014using} seeks an STC-labeling such that $E_\mathit{miss} = \emptyset$ and $|E_\mathit{weak}$ is minimizes. 
The \textsc{minSTC+} problem seeks an STC-labeling $(E_\mathit{weak}, E_\mathit{miss})$ that minimizes $|E_\mathit{weak}| + |E_\mathit{miss}|$. This can be formally cast as the following ILP:
\begin{equation}
	\label{minstc+}
	\begin{array}{lll}
		\min & \displaystyle{\sum_{(i,j) \in {V \choose 2}}} z_{ij} &\\
		\text{s.t. } & z_{ij} + z_{jk} + z_{ik} \geq 1 &\text{ if $(i,j,k) \in \mathcal{W}_j$}\\
		&  z_{ij} \in \{0,1\} &\text{ for $(i,j) \in {V \choose 2}$}.
	\end{array}		
\end{equation}
If $z_{ij}=1$, this represents the presence of either a weak edge (if $(i,j) \in E$) or a missing edge (if $(i,j)\notin E$). This problem is also NP-hard but can be reduced to \textsc{Vertex Cover} in a 3-uniform hypergraph in order to obtain a 3-approximation algorithm. 

While the strong triadic closure principle and the resulting edge labeling problems are of their own independent interest, we are particularly interested in these problems given their relationships with certain clustering objectives. The solution for \textsc{minSTC} is known to lower bound the cluster deletion objective, and \textsc{minSTC+} similarly lower bounds cluster editing~\cite{gruttemeier2020relation,gruttemeierstrong,konstantinidis2018strong,veldt2022stc}. The LP relaxations for these problems, therefore, provide lower bounds for cluster deletion and clustering editing that are cheaper and easier to compute than the standard linear programming relaxations. Veldt~\cite{veldt2022stc} recently showed how to round these LP relaxations---and how to round approximation solutions for \textsc{minSTC+} and \textsc{minSTC}---to develop faster approximation algorithms for cluster editing and cluster deletion.
%However, these methods were only applied to inherently unweighted types of graph clustering objectives. 
We generalize these techniques in order to develop faster approximation algorithms for the full parameter regime of LambdaCC.
\section{Lambda STC Labeling}
We now introduce a parameterized edge labeling problem called LambdaSTC, which generalizes previous edge labeling problems and can also be used to develop new approximations for LambdaCC.

%propose a more general approach to the edge labeling problem that provides a broader understanding of the relationship between strong triadic closure and weighted correlation clustering.  To achieve this, we introduce a more versatile variant of STC referred to as LambdaSTC.

\textbf{Problem definition.} Given graph $G=(V,E)$ and a parameter $\lambda$, 
%we assign the label `weak' to existing edges $(i,j)\in E$ with a cost of $(1-\lambda)$, and label non-edges $(i,j)\notin E$ as `missing' with a cost of $\lambda$. Simply put, 
LambdaSTC is the problem of finding an STC-labeling $(E_\mathit{weak}, E_\mathit{miss})$ that minimizes $(1-\lambda)|E_\mathit{weak}| + \lambda|E_\mathit{miss}|$. This can be formulated as:
\begin{equation}
	\label{eq:lamstc} 
	\begin{aligned}
		\text{ min}\quad &  {\textstyle \sum_{(i,j)\in E}(1-\lambda)z_{ij} + \sum_{(i,j)\notin E}\lambda z_{ij}}\\
		\text{s.t} \quad &  z_{ij}+ z_{jk}+z_{ik}\ge 1 \text{  for all } (i,j,k) \in \mathcal{W}_j\\
		& z_{ij} \in \{0,1\} \text{ for all } (i,j) \in {\textstyle {V \choose 2}.   }
	\end{aligned}
\end{equation}
%This solution comprises a collection of weak edges denoted as $E_{weak}\subseteq E$, as well as a set of missing edges denoted as $E_{miss}\subseteq ({V \choose 2}-E)$.
We first note that this problem is equivalent to the \textsc{minSTC+} problem when $\lambda = \frac12$. When $\lambda$ is close enough to 1, LambdaSTC is equivalent to \textsc{minSTC}. To see why, note that if $\lambda > |E| (1-\lambda)$, then labeling a single non-edge as ``missing'' is more expensive than labeling \emph{all} edges in $E$ as ``weak''. Hence, with a couple of steps of algebra, we can see that when $\lambda > \frac{|E|}{|E| + 1}$, the optimal LambdaSTC solution will not place any non-edges in $E_\mathit{miss}$, but will only add edges to $E_\mathit{weak}$ in order to construct a valid STC-labeling, so this differs from  \textsc{minSTC} only by a multiplicative constant factor. 

Varying $\lambda$ between $\frac12$ and $1$ offers us the flexibility to interpolate between \textsc{minSTC+} and \textsc{minSTC}. Meanwhile, the $\lambda < \frac12$ regime corresponds to a new family of edge labeling problems where it is cheaper to label non-edges as missing. 
%In general, $\lambda$ can be tuned to strike any desired balance between labeling non-edges as missing and labeling edges as weak in order to find an STC labeling. 
If we think of the graph $G = (V,E)$ as a (potentially noisy) representation of some social network observed from the real world, then the parameter $\lambda$ can be chosen based on a user's belief about the accuracy of the process that was used to observe edges. If the user has a strong belief that there are many friendships in the social network that were just not directly observed (and hence are not included as edges in the graph) then a smaller value of $\lambda$ may be appropriate. If missing edges are unlikely, then a large value of $\lambda$ is appropriate.

\textbf{Approximating LambdaSTC.}
Approximation algorithms for $\minstc$ and $\minstcp$ can be obtained by reducing these problems to \emph{unweighted} \textsc{Vertex Cover} problems (in graphs and hypergraphs, respectively)~\cite{sintos2014using,gruttemeierstrong}. We generalize this approach and design an approximation algorithm that applies to LambdaSTC for any choice of the parameter $\lambda$, based on the Local-Ratio algorithm for weighted \textsc{Vertex Cover}~\cite{bar1985local}. 
\begin{algorithm}[tb]
	\caption{$\textsc{CoverLabel}(G,\lambda)$}
	\label{alg:coverlabel}
	\begin{algorithmic}[1]
		\STATE{\bfseries Input:} Undirected graph $G = (V,E); \lambda \in (0,1)$
		\STATE {\bfseries Output:} LambdaSTC labeling $\{E_{weak},E_{miss}\}$ of $G$.
            \STATE $E_{miss},E_{weak} \leftarrow \emptyset $ // Initialize empty sets 
		\STATE  $\forall (i,j) \in E$ set $r_{ij} = (1-\lambda)$ ;  $\forall (i,j) \notin E$ set $r_{ij} = \lambda$ 
		\FOR{$\{i,j,k\} \in \mathbf{W}_j$}
            \STATE M = $\min\{r_{ij},r_{jk},r_{ik}\}$
            \STATE $r_{ij} \leftarrow r_{ij} - M$; \quad $r_{jk} \leftarrow r_{jk} - M$; \quad $r_{ik} \leftarrow r_{ik} - M$
            \ENDFOR
  		% \STATE 
	
		\STATE $E_{miss} = \{ (i,j) \notin E \colon r_{ij}=0 \}$;  $E_{weak} = \{ (i,j) \in E \colon r_{ij}=0 \}$
		
		\STATE Return $\{E_{miss},E_{weak}\}$
%		\STATE Return $\textsc{FlipPivot}(G,E',E_W)$
	\end{algorithmic}
\end{algorithm}
Algorithm~\ref{alg:coverlabel} is pseudocode for our method, \textsc{CoverLabel}. This method ``covers'' all the open wedges in graph $G$ by either adding a missing edge between a pair of non-adjacent nodes or labeling at least one of the two edges as weak. This can be seen as finding a weighted vertex cover on a 3-uniform hypergraph $\mathcal{H} = (\mathcal{V,E})$ constructed as follows:
\begin{itemize}
    \item Every node pair $(i,j)\in {V \choose 2} $ is assigned to a vertex $v_{ij}$ in $\mathcal{V}$ with a node-weight of $(1-\lambda)$ if $(i,j)\in E$ and $\lambda$ otherwise.
    \item A hyperedge $\{v_{ij},v_{jk},v_{ik}\} \in \mathcal{E}$ is created for each open wedge $\{i,j,k\} \in \mathcal{W}$.
\end{itemize}
Nodes in $\mathcal{H}$ correspond to edges in $G$, and hyperedges in $\mathcal{H}$ correspond to open wedges in $G$. Therefore, a vertex cover in $\mathcal{H}$ corresponds to a labeling of edges in $G$ that ``covers'' all open wedges in a way that produces an STC-labeling. If the covered vertex is associated with an edge $(i,j) \in E$, we consider $(i,j)$ a weak edge. However, if it corresponds to a non-edge pair $(i,j) \notin E$, this is labeled as a missing edge. This provides an approximation-preserving reduction from LambdaSTC to 3-uniform hypergraph weighted \textsc{Vertex Cover}, so employing a 3-approximation algorithm for hypergraph vertex cover results in a 3-approximation for LambdaSTC. \textsc{CoverLabel} is equivalent to {implicitly} applying the Local-Ratio algorithm~\cite{bar1985local} to the hypergraph $\mathcal{H}$ described above. By \emph{implicitly}, we mean that we do not form $\mathcal{H}$ explicitly, but we apply the mechanics of this algorithm directly to find an STC-labeling in $G$. The following theorem follows from the guarantee of the Local-Ratio algorithm for node-weighted 3-uniform hypergraphs.
\begin{theorem}
Algorithm \ref{alg:coverlabel} is a 3-approximation algorithm for the LambdaSTC labeling problem.
\end{theorem}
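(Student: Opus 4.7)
The plan is to formalize the reduction sketched in the excerpt to weighted vertex cover on the $3$-uniform hypergraph $\mathcal{H}=(\mathcal{V},\mathcal{E})$, verify that Algorithm~\ref{alg:coverlabel} is an implicit execution of the Bar-Yehuda--Even Local-Ratio algorithm on $\mathcal{H}$, and then invoke the standard Local-Ratio bound for $k$-uniform hypergraph vertex cover with $k=3$. The reduction itself is a direct bijection: a set $S\subseteq\mathcal{V}$ covers every hyperedge of $\mathcal{H}$ if and only if the pair $(E_\mathit{weak},E_\mathit{miss})$ obtained by declaring $(i,j)\in E$ weak when $v_{ij}\in S$ and $(i,j)\notin E$ missing when $v_{ij}\in S$ is a feasible STC-labeling, because the hyperedges of $\mathcal{H}$ correspond exactly to the open wedges $\mathcal{W}$. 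The vertex weights $w(v_{ij})=1-\lambda$ for edges and $w(v_{ij})=\lambda$ for non-edges are precisely the coefficients in objective~\eqref{eq:lamstc}, so $w(S)$ equals the LambdaSTC cost of the corresponding labeling. Hence any $3$-approximation for weighted vertex cover on $\mathcal{H}$ yields a $3$-approximation for LambdaSTC.

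Next I would identify Algorithm~\ref{alg:coverlabel} with Local-Ratio on $\mathcal{H}$ and verify that its output is feasible. The residuals $r_{ij}$ are initialized to the vertex weights $w(v_{ij})$, and each iteration on an open wedge $\{i,j,k\}$ amounts to processing the hyperedge $\{v_{ij},v_{jk},v_{ik}\}$ by subtracting $M_t=\min\{r_{ij},r_{jk},r_{ik}\}$ from each of its vertices. After iteration $t$ at least one residual in that hyperedge is zero, and once a residual reaches zero it stays zero (any later iteration containing it has $M=0$). Consequently $C=\{v_{ij}:r_{ij}=0\}$ at termination intersects every hyperedge that appears in the outer loop; since the loop ranges over all wedges in $\mathcal{W}$, $C$ is a feasible vertex cover of $\mathcal{H}$, so the returned $(E_\mathit{miss},E_\mathit{weak})$ is a feasible STC-labeling.

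Finally I would apply the textbook Local-Ratio analysis. Decompose $w=\sum_t M_t\chi_{e_t}+r^{\text{final}}$, where $e_t$ denotes the hyperedge processed in iteration $t$ and $\chi_{e_t}$ is its indicator; by telescoping, $r_{ij}^{\text{final}}=w(v_{ij})-\sum_{t:\,v_{ij}\in e_t}M_t\ge 0$. For any feasible cover $C^\ast$, $|C^\ast\cap e_t|\ge 1$ gives $w(C^\ast)\ge\sum_t M_t$. For the returned $C$, $r_v^{\text{final}}=0$ on $C$ together with $|C\cap e_t|\le 3$ gives $w(C)=\sum_t M_t\,|C\cap e_t|\le 3\sum_t M_t\le 3\,w(C^\ast)$. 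Taking $C^\ast$ optimal yields the claim. The only step requiring genuine attention is the feasibility check in the second paragraph; once that is in hand the approximation ratio is the standard Local-Ratio guarantee for vertex cover on $3$-uniform hypergraphs, so no further work is needed.
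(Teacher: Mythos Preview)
Your proposal is correct and follows exactly the route the paper indicates: the paper states that the theorem ``follows from the guarantee of the Local-Ratio algorithm for node-weighted 3-uniform hypergraphs'' after describing the same reduction to $\mathcal{H}$, and you have simply supplied the standard Local-Ratio feasibility and ratio arguments that the paper leaves implicit.
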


\textbf{New Lower bounds for LambdaCC.} The LambdaSTC objective lower bounds LambdaCC, and a solution to the LambdaSTC LP yields a new type of lower bound for LambdaCC. To see why, consider the following change of variables: $z_{ij} = x_{ij}$ if $(i,j)\in E$, and $z_{ij} = 1-x_{ij}$ otherwise. This gives us the following equivalent formulation for the LP relaxation of LambdaSTC:
\begin{equation}
	\label{eq:cheapcc} 
	\begin{aligned}
		\text{ min}\quad &  {\textstyle \sum_{(i,j)\in E}(1-\lambda)x_{ij} + \sum_{(i,j)\notin E}\lambda(1-x_{ij})}\\
		\text{s.t.} \quad &  x_{ij}+ x_{jk} \ge x_{ik} \text{ if} (i,j,k) \in \mathcal{W}_j\\
		& 0\le x_{ij} \le 1 \text{ for all } (i,j) \in {\textstyle {V \choose 2}   }.
	\end{aligned}
\end{equation}
This linear program shares the same objective function as the canonical LambdaCC LP relaxation, but has a subset of the $O(n^3)$ triangle inequality constraints. In particular, it only constraints $x_{ij}+x_{jk}\ge x_{ik}$ when $\{i,j,k\}$ is an open wedge, rather than for all triplets of edges. This makes this LP relaxation easier to solve on a large scale. Furthermore, this is an example of a \emph{covering} LP, which can be solved much more quickly than a generic LP using the multiplicative weights update method~\cite{fleischer2004fast,quanrud2020nearly,garg2004fractional}.

\section{Faster LambdaCC Algorithms}
We now present faster algorithms for LambdaCC, using  lower bounds derived from the LambdaSTC objective. 
%Our primary focus lies in approximating clustering solutions for values of $\lambda$ greater than or equal to $1/2$. 
%This emphasis is due to the impossibility of achieving constant factor approximations for small $\lambda$ values, particularly when the LP relaxation exhibits an integrality gap of $O(\log n)$. By considering $\lambda \ge 1/2$, LambdaCC objective interpolates between the special cases of CE and CD. 
For a fixed $\lambda$ value, we use the notation $\lamstc$ and $\lamcc$ to represent the optimal solution values for LambdaSTC and LambdaCC, respectively.

\subsection{CoverFlipPivot algorithm}
We present the first combinatorial algorithm for LambdaCC, called \textsc{CoverFlipPivot} (CFP), which provides a 6 approximation for every $\lambda \ge 1/2$. As outlined in Algorithm~\ref{alg:CFP}, CFP comprises three steps:
\begin{enumerate}
    \item \emph{Cover:} Generate a feasible LambdaSTC labeling of $G$ by running the 3-approximate \textsc{CoverLabel} algorithm.
    \item \emph{Flip:} Flip the edges in the original graph $G=(V, E)$ to create a derived graph $\hat{G}=(V, \hat{E})$, by deleting `weak' edges $E_\mathit{weak}$ and adding `missing' edges $E_\mathit{miss}$.
    \item \emph{Pivot:} Run \textsc{Pivot} on the derived graph $\hat{G}$.
\end{enumerate}

\begin{algorithm}[tb]
	\caption{$\textsc{CoverFlipPivot}(G,\lambda)$}\label{alg:CFP}
	\begin{algorithmic}[1]
		\STATE{\bfseries Input:} Undirected graph $G = (V,E); \lambda \ge 1/2$
		\STATE {\bfseries Output:} Feasible LambdaCC clustering of $G$.
  		\STATE \textit{Cover:} $\{E_\mathit{weak},E_\mathit{miss}\}$ = \textsc{CoverLabel}$(G,\lambda)$ 
  \STATE \textit{Flip:}  Construct $\hat{G} = (V, \hat{E})$ where $\hat{E} = E_\mathit{miss} \cup (E - E_{\mathit{weak}})$
	\STATE Return $\textsc{Pivot}(\hat{G})$
	\end{algorithmic}
	\label{alg:6ce}
\end{algorithm}

% The following theorem sheds light on the connection between the LambdaSTC lower bound and the resulting LambdaCC clustering solution, offering valuable insights into the relationship between these two objectives and providing a useful bound for assessing the quality of the clustering solution. It reveals that the cost of mistakes or disagreements made by the clustering can be quantified based on the characteristics of the derived LambdaSTC solution. 
Before proving any approximation guarantees for CFP, we begin with a more general result that sheds light on the relationship between LambdaSTC and LambdaCC. This generalizes previous results showing that the optimal cluster deletion and \textsc{minSTC} objectives (and similarly, the cluster editing and \textsc{minSTC+} objectives) differ by at most a factor of 2~\cite{veldt2022stc}.
\begin{theorem}
    \label{thm:combi}
    Given an input graph $G=(V,E)$, a clustering parameter $\lambda \ge 1/2$, and an STC-labeling $(E_\mathit{weak},E_\mathit{miss})$, running \textsc{Pivot} on the derived graph $\hat{G} = (V,\hat{E})$ where $\hat{E} = E_\mathit{miss} \cup (E-E_\mathit{weak})$, returns a LambdaCC clustering solution with a bound of $2((1-\lambda)|E_\mathit{weak}|+\lambda|E_\mathit{miss}|)$ on the expected cost of disagreements.
\end{theorem}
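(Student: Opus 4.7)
The plan is to invoke Theorem~\ref{thm_2.1} (van~Zuylen--Williamson) with $\alpha = 2$ and the budget assignment
\[
b_{ij} = \begin{cases} 1-\lambda & \text{if } (i,j) \in E_\mathit{weak}, \\ \lambda & \text{if } (i,j) \in E_\mathit{miss}, \\ 0 & \text{otherwise,} \end{cases}
\]
so that $\alpha \sum_{i<j} b_{ij} = 2\bigl((1-\lambda)|E_\mathit{weak}| + \lambda|E_\mathit{miss}|\bigr)$, which is exactly the bound we wish to prove. It then suffices to verify that $\hat{G}$ satisfies the two conditions of Theorem~\ref{thm_2.1} relative to the LambdaCC weights $(w^+,w^-)$ derived from $G$.

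Condition~1 is essentially immediate from how $\hat{E}$ is constructed. A pair $(i,j)\in\hat{E}$ that has nonzero disagreement weight $w_{ij}^- = \lambda$ must satisfy $(i,j)\notin E$, and the only way a non-edge enters $\hat{E}$ is via $E_\mathit{miss}$, giving $b_{ij} = \lambda$; symmetrically, a pair $(i,j)\notin\hat{E}$ with $w_{ij}^+ = 1-\lambda$ must lie in $E_\mathit{weak}$, giving $b_{ij} = 1-\lambda$. In both cases $w \le 2 b_{ij}$ holds with room to spare.

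The core of the argument is Condition~2, which I would prove by a case analysis over open wedges $(i,j,k)$ in $\hat{G}$ (with $(i,j),(j,k)\in\hat{E}$ and $(i,k)\notin\hat{E}$), split according to which of the three pairs actually lie in $E$. The cleanest case is when all three pairs are edges of $G$ (a triangle in $G$): then $(i,k)\notin\hat{E}$ forces $(i,k)\in E_\mathit{weak}$, contributing $b_{ik}=1-\lambda$, and the LHS $w_{ij}^+ + w_{jk}^+ + w_{ik}^- = 2(1-\lambda)$ matches $2 b_{ik}$ exactly. The pivotal case is when $(i,j)$ and $(j,k)$ are edges of $G$ but $(i,k)$ is not, i.e., $(i,j,k)$ is an open wedge of $G$: the membership conditions $(i,j),(j,k)\in\hat{E}$ and $(i,k)\notin\hat{E}$ translate to $(i,j),(j,k)\notin E_\mathit{weak}$ and $(i,k)\notin E_\mathit{miss}$, which directly contradicts the STC-labeling constraint on this wedge. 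So this case is impossible, and the hardest inequality never has to be checked.

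The remaining cases are routine, but one of them is where the hypothesis $\lambda \ge 1/2$ is actually needed: when exactly one of $(i,j),(j,k)$ is an edge of $G$ and $(i,k)\notin E$, the LHS evaluates to $1$ while the only nonzero budget in the triple is $b$-value $\lambda$ on the non-edge of $\hat{E}$, yielding RHS $= 2\lambda$; this is where the bound tightens to the assumption. All other sub-cases give LHS bounded by a constant while RHS is at least $2\lambda$ or $2(1-\lambda)$. Once every case has been dispatched, Theorem~\ref{thm_2.1} applies and gives the claimed expected-cost bound. The main obstacle is simply being careful with the bookkeeping across the eight possible $E$-membership patterns for a wedge $(i,j,k)$ in $\hat{G}$, and in particular making sure the ``forbidden'' case genuinely is ruled out by the STC property rather than some weaker consequence of it.
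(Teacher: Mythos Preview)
Your proposal is correct and follows essentially the same approach as the paper: invoke Theorem~\ref{thm_2.1} with $\alpha=2$ and budgets $b_{ij}$ equal to $1-\lambda$ on $E_\mathit{weak}$, $\lambda$ on $E_\mathit{miss}$, and $0$ elsewhere, then verify Condition~1 directly and Condition~2 by an eight-case analysis on the $E$-membership pattern of the three pairs, using the STC property to rule out the $(i,j),(j,k)\in E$, $(i,k)\notin E$ case and observing that the $\lambda\ge 1/2$ hypothesis is needed precisely in the cases where exactly one of $(i,j),(j,k)$ lies in $E$ and $(i,k)\notin E$. The paper presents the eight cases in a table but the content is identical to what you outline.
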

\begin{proof}
    To prove this result, we show that all conditions of Theorem~\ref{thm_2.1} are satisfied for $\alpha = 2$. Recall that for the LambdaCC framework, weights are defined as:
	\begin{equation}
		\label{4owij}
		(w_{ij}^+, w_{ij}^-) = \begin{cases}
			(1-\lambda,0) & \text{ if $(i,j) \in E$ }\\
			(0,\lambda) & \text{ if $(i,j) \notin E$}.
		\end{cases}
	\end{equation}
To bound the LambdaCC objective in terms of the STC-labeling, we define budgets based on flipped edges:
	\begin{equation}
		\label{4ocij}
		b_{ij} = \begin{cases}
			(1-\lambda) & \text{ if $(i,j) \in E_\mathit{weak}$ }\\
			\lambda & \text{if $(i,j) \in E_\mathit{miss}$}\\
			0 & \text{ otherwise.}
		\end{cases}
	\end{equation}
The sum of budgets can now be written as $\sum_{i<j}b_{ij}=(1-\lambda)|E_\mathit{weak}| + \lambda|E_\mathit{miss}|$. Now, we show that Condition (1) of Theorem~\ref{thm_2.1} is satisfied for $\alpha=2$, by considering four cases:
 \begin{align*}
		&\text{ if }(i,j) \in \hat{E } \cap E \text{ then } w_{ij}^- = 0 \leq 2 b_{ij}, \\
		&\text{ if }(i,j) \in \hat{E} \text{ but } (i,j) \notin E \text{ then } w_{ij}^- = \lambda \le 2\lambda = 2 b_{ij}, \\
		&\text{ if }(i,j) \notin \hat{E} \text{ and } (i,j) \notin E \text{ then } w_{ij}^+ = 0 \leq 2 b_{ij} \text{ and }\\
		&\text{ if }(i,j) \notin \hat{E} \text{ but } (i,j) \in E \text{ then } w_{ij}^+ = 1-\lambda \leq 2(1-\lambda) = 2 b_{ij}.
	\end{align*}
Next, we check Condition (2) of Theorem~\ref{thm_2.1}, i.e., we prove that $w_{ij}^+ + w_{jk}^{+} + w_{ik}^- \leq \alpha(b_{ij}+b_{jk}+b_{ik})$ for every open wedge $(i,j,k)$ centered at $j$ in $\hat{G}$.
The budgets and weights $\{w_{ij}^+, w_{jk}^+, w_{ik}^-\}$ depend on which pairs $(i,j)$, $(j,k)$, $(i,k)$ are edges in $G$. Table~\ref{tab:cfpproof} covers all 8 cases for how a triplet of nodes in $G$ could be mapped to an open wedge centered at $j$ in $\hat{G}$. The first column indicates which of the pairs $\{(i,j)(j,k)(i,k)\}$ are edges in $G$, e.g., Y-Y-N (``yes-yes-no'') means that $(i,j)$ and $(j,k)$ are in $E$, but $(i,k)$ is not. In each case, we show why $w_{ij}^+ + w_{jk}^{+} + w_{ik}^-$ is a lower bound for $\alpha(b_{ij}+b_{jk}+b_{ik})$ when $\alpha=2$. By Theorem~\ref{thm_2.1} the total cost of mistakes is then bounded by $\alpha\sum_{i<j}b_{ij} = 2((1-\lambda)|E_\mathit{weak}|+\lambda|E_\mathit{miss}|)$. 
\begin{table}[htbp]
    \centering
    \caption{Proof of Theorem~\ref{thm_2.1} Condition (2)}
    \label{tab:cfpproof}
    \vspace{-\baselineskip}
    \begin{tabular}{cllc} 
        \toprule
        Edges in $E$ &  &  \\
        $ij-jk-ik$ & $2(b_{ij} + b_{jk} + b_{ik})$ & $w_{ij}^+ + w_{jk}^+ + w_{ik}^-$  \\
        \midrule
        Y-Y-Y & $2(0 + 0 + 1-\lambda) = 2(1-\lambda)$ &  $2(1-\lambda)$ \\
        Y-Y-N & Not Applicable & N.A \\
        Y-N-N & $2(0 + \lambda + 0) = 2\lambda$ & $1 \le 2\lambda$  \\
        Y-N-Y & $2(0 + \lambda + 1-\lambda) = 2$ & $1-\lambda < 2$ \\
        N-N-Y & $2(\lambda + \lambda + 1-\lambda) = 2(1+\lambda)$ & $0 < 2(1+\lambda)$ \\
        N-Y-Y & $2(\lambda + 0 + 1-\lambda) = 2$ & $1-\lambda < 2$ \\
        N-Y-N & $2(\lambda + 0 + 0) = 2\lambda$ & $1 \le 2\lambda$ \\
        N-N-N & $2(\lambda + \lambda + 0) = 2\lambda$ & $\lambda < 2\lambda$ \\
        \bottomrule
    \end{tabular}
\end{table}
The second row of Table~\ref{tab:cfpproof} corresponds to the case where an open wedge in $G$ maps to an open wedge in $\hat{G}$. However, this is in fact impossible, as it implies that none of the node pairs in $(i,j,k)$ were \emph{flipped}, even though $(i,j,k)$ is an open wedge in $G$, which violates the assumption that $(E_\mathit{weak}, E_{miss})$ is an STC-labeling.
\end{proof}

\begin{corollary}
Let $\mathcal{A}$ be a $\beta$-approximation algorithm for LambdaSTC and fix $\lambda \geq 1/2$. Running the procedure in Theorem~\ref{thm:combi} on the solution $(E_\mathit{weak}, E_\mathit{miss})$ obtained from $\mathcal{A}$ yields a $(2\beta)$-approximate solution for LambdaCC.
Thus, $\lamstc \leq \lamcc \leq 2 \lamstc$, and Algorithm \ref{alg:CFP} is a randomized 6-approximation for LambdaCC.
\end{corollary}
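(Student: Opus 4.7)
The plan is to combine Theorem~\ref{thm:combi} with a matching inequality $\lamstc \leq \lamcc$, after which each piece of the corollary falls out directly. The only real piece of new work is the lower bound $\lamstc \leq \lamcc$; everything else is bookkeeping on top of results already in hand.

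First I would prove $\lamstc \leq \lamcc$ by exhibiting, for any LambdaCC clustering, an STC-labeling of the same cost. Concretely, given a clustering I take $E_\mathit{weak}$ to be the set of edges in $E$ cut by the clustering and $E_\mathit{miss}$ to be the set of non-edges in $\binom{V}{2} \setminus E$ whose endpoints lie in a common cluster. The cost $(1-\lambda)|E_\mathit{weak}| + \lambda|E_\mathit{miss}|$ of this labeling is exactly the LambdaCC cost of the clustering, so it suffices to check that the labeling is STC-feasible. For any open wedge $(i,j,k) \in \mathcal{W}_j$ I would do a short case analysis on how $\{i,j,k\}$ split among clusters: if all three are together, then $(i,k) \notin E$ gives $(i,k) \in E_\mathit{miss}$; otherwise $j$ is separated from at least one of $i,k$, making the corresponding edge among $\{(i,j),(j,k)\}$ a cut edge and hence a member of $E_\mathit{weak}$. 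Either way one of the three indicator variables is $1$, so the STC constraint is satisfied. Taking the minimum over clusterings gives $\lamstc \leq \lamcc$.

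Second, I would apply Theorem~\ref{thm:combi} to an optimal LambdaSTC labeling: running \textsc{Pivot} on the derived graph yields a clustering with expected LambdaCC cost at most $2\lamstc$, so in particular some clustering achieves cost at most $2\lamstc$, giving $\lamcc \leq 2\lamstc$. Combined with the previous paragraph, this establishes the sandwich $\lamstc \leq \lamcc \leq 2\lamstc$. For the general approximation claim, if $\mathcal{A}$ produces an STC-labeling of cost at most $\beta\,\lamstc$, another direct application of Theorem~\ref{thm:combi} bounds the expected disagreement cost of \textsc{Pivot} on the derived graph by $2\beta\,\lamstc \leq 2\beta\,\lamcc$, proving the $(2\beta)$-approximation.

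The final statement about CFP is then immediate: Algorithm~\ref{alg:CFP} instantiates $\mathcal{A}$ as \textsc{CoverLabel}, which is a $3$-approximation for LambdaSTC, so setting $\beta = 3$ gives a randomized $6$-approximation for LambdaCC. I expect the main (and only) obstacle to be getting the case analysis for STC-feasibility right in the first step, since all remaining work is a clean composition of Theorem~\ref{thm:combi} and the $3$-approximation guarantee of \textsc{CoverLabel}.
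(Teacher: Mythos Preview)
Your proposal is correct and follows essentially the same route as the paper: establish $\lamstc \leq \lamcc$, then invoke Theorem~\ref{thm:combi} on a $\beta$-approximate labeling to get the $2\beta$ factor, and specialize to $\beta=1$ and $\beta=3$. The only minor difference is that you give an explicit constructive argument for $\lamstc \leq \lamcc$ (clustering mistakes form a feasible STC-labeling), whereas the paper simply asserts this inequality, relying on the earlier observation that the LambdaSTC constraints are a subset of the LambdaCC triangle-inequality constraints under the change of variables in~\eqref{eq:cheapcc}.
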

% \begin{proof}
% We have already seen that optimal LambdaCC solution upper bounds the optimal LambdaSTC solution such that $\lamstc \leq \lamcc$. Now, if algorithm $\mathcal{A}$ produces a $\beta$ approximate labeling solution $(E_{weak},E_{miss})$ with the objective value of $\mathcal{A}_{\lambda STC} = ((1-\lambda)|E_{weak}|+\lambda|E_{miss}|)$, we can say that $\lamstc \leq \mathcal{A}_{\lambda STC} \leq \beta \lamstc $. This implies 
% \begin{align*}
%      \mathcal{A}_{\lambda STC} \leq \beta \lamstc \leq \beta \lamcc \Rightarrow \frac{\mathcal{A}_{\lambda STC}}{\beta} \leq \lamstc \leq \lamcc
% \end{align*}
% Thus $\mathcal{A}_{\lambda STC}/ \beta$ lower bounds $\lamcc$.
%  Using Theorem \eqref{alg:CFP}, we have also established that we can find a clustering solution with at most $2((1-\lambda)|E_{weak}|+\lambda|E_{miss}|) = 2\mathcal{A}_{\lambda STC}$ mistakes. This upper bound is within a factor of $2\beta$ with the lower bound. 
% \begin{align*}
%       \frac{\mathcal{A}_{\lambda STC}}{\beta} \leq \lamcc \leq 2 \mathcal{A}_{\lambda STC}
% \end{align*}
% This proves that a $\beta$ approximate LambdaSTC labeling algorithm $\mathcal{A}$ returns a $2\beta$ approximate LambdaCC clustering solution. And when this labeling algorithm is an optimal LambdaSTC solution, then  $\beta=1$ and $\lamstc \leq \lamcc \leq 2 \lamstc$.
% \end{proof}
\begin{proof}
    The optimal LambdaCC solution provides an upper bound for the optimal LambdaSTC solution, i.e., $\lamstc \leq \lamcc$. Algorithm $\mathcal{A}$ produces a $\beta$-approximate labeling solution $(E_\mathit{weak},E_\mathit{miss})$ with the LambdaSTC objective value $\text{STC}({\mathcal{A}}) = ((1-\lambda)|E_\mathit{weak}|+\lambda|E_\mathit{miss}|)$, so we have that
    % $\lamstc \leq \text{STC}_{\mathcal{A}} \leq \beta \lamstc$. This implies:
%
    \begin{align*}
\lamstc \leq \text{STC}({\mathcal{A}}) \leq \beta \lamstc \leq \beta \lamcc.
\end{align*}
Therefore, $\frac{1}{\beta}\text{STC}({\mathcal{A}})$ lower bounds $\lamcc$. Applying Theorem~\ref{alg:CFP} with algorithm $\mathcal{A}$, we obtain a clustering with LambdaCC objective score of $\text{CC}({\mathcal{A}}) \leq 2((1-\lambda)|E_\mathit{weak}|+\lambda|E_\mathit{miss}|) = 2\text{STC}_{\mathcal{A}}$. Thus,
\begin{align*}
\text{CC}({\mathcal{A}}) \leq 2\cdot \text{STC}({\mathcal{A}}) \leq 2\cdot\beta \cdot\lamcc,
% \frac{\text{STC}_{\mathcal{A}}}{\beta} \leq \lamcc \leq 2 \cdot \text{STC}_{\mathcal{A}
\end{align*}
so $\text{CC}({\mathcal{A}})$ is a $(2\beta)$-approximation for LambdaCC. If $\mathcal{A}$ optimally solves LambdaSTC, then $\beta = 1$ and so $\lamstc \leq \lamcc \leq 2\lamstc$. If $\mathcal{A}$ represents our 3-approximate \textsc{CoverLabel} algorithm for LambdaSTC, then combining it with Theorem~\ref{thm:combi} shows that Algorithm~\ref{alg:6ce} is a $2\cdot 3 = 6$-approximation for LambdaCC.
% This upper bound is within a factor of $2\beta$ compared to the lower bound.
% This proves that a $\beta$-approximate LambdaSTC labeling algorithm $\mathcal{A}$ returns a $2\beta$-approximate LambdaCC clustering solution. Furthermore, when this labeling algorithm provides an optimal LambdaSTC solution, i.e., $\beta=1$, we have $\lamstc \leq \lamcc \leq 2\lamstc$.
\end{proof}
%Since \textsc{CoverLabel} is a 3-approximation algorithm for LambdaCC, we immediately obtain the following result. 
Algorithm~\ref{alg:CFP} can be easily derandomized using the deterministic strategy for choosing pivot nodes for Theorem~\ref{thm_2.1} (see~\cite{vanzuylen2009deterministic}).
 
 % \textsc{CoverLabel} algorithm generates a 3-approximate labeling solution which is then rounded to produces a LambdaCC clustering solution with a combined approximation factor of $2\cdot3=6$ in optimality.Note that we use a randomized \textsc{Pivot} which can be also be derandomized by using a deterministic pivoting strategy~\cite{vanzuylen2009deterministic}.
 
\subsection{Faster LP algorithm}
Algorithm~\ref{alg:labelcclp} is an approximation algorithm for LambdaCC based on rounding the LambdaSTC LP relaxation. This LP has $|\mathcal{W}|$ constraints, whereas the canonical LP has $O(n^3)$. In the worst case, it is possible for $|\mathcal{W}| = O(n^3)$, but our experimental results demonstrate that $|\mathcal{W}|$ is far smaller for all of the real-world graphs we consider. Even more significantly, the LambdaSTC LP is a \emph{covering} LP, which makes it possible to use fast existing techniques for approximating covering LPs. This leads to much faster algorithms, at the expense of only a slightly worse approximation factor since the LP is only solved approximately. The next section provides a more detailed runtime analysis.

Our approach for rounding the LambdaSTC LP follows a similar strategy as CFP, which involves building a new graph $\hat{G}$ and then running \textsc{Pivot}. The construction of $\hat{G}$ depends on the LP variables $\{x_{ij}\}$, the edge structure in $G$, and the value of $\lambda$. When $\lambda \geq 1/2$, we always ensure that a non-edge in $G$ maps to a non-edge in $\hat{G}$. For $\lambda < 1/2$, we always ensure that an edge in $G$ maps to an edge in $\hat{G}$. We first prove that the algorithm has an approximation factor that ranges from $3$ to $5$ as $\lambda$ goes from $1/2$ to $1$. 
\begin{algorithm}[tb]
\label{alg:labelcclp}
\caption{Rounding the LambdaSTC LP into a clustering}
\label{alg:labelcclp}
	\begin{algorithmic}[1]
		\STATE{\bfseries Input:} Undirected graph $G=(V,E) ; \lambda \in (0,1)$ 
		\STATE {\bfseries Output:} Clustering of $G$.
		\STATE Solve LambdaSTC LP~\eqref{eq:cheapcc} and obtain fractional $x_{ij}$ values
		\STATE Construct $\hat{G} = (V, \hat{E})$ where 
		\begin{equation*}
            \hat{E} =
            \begin{cases}
                \{(i,j): (i,j)\in E \textit{ and } x_{ij} < \frac{2\lambda}{7\lambda - 2}\} & \text{ if $\lambda \geq 1/2$} \\
                \{(i,j): (i,j)\in E \textit{ or } x_{ij} < \frac{\lambda}{1 + \lambda}\} & \text{ if $\lambda < 1/2$}
            \end{cases}
		\end{equation*}
		\STATE Return $\textsc{Pivot}(\hat{G})$
	\end{algorithmic}
\end{algorithm}
\begin{theorem}
	\label{thm:cheap}
	 When $\lambda \geq 1/2$, Algorithm~\ref{alg:labelcclp} is a randomized $(7-\frac{2}{\lambda})$-approximation algorithm for LambdaCC. 
\end{theorem}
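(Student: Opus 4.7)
My plan is to deploy Theorem~\ref{thm_2.1} (the van Zuylen--Williamson framework) with $\alpha = 7 - 2/\lambda$, using the LP solution from Algorithm~\ref{alg:labelcclp} to define budgets that sum to the LP optimum. Specifically, I would set
\begin{equation*}
b_{ij} = \begin{cases} (1-\lambda)\, x_{ij} & \text{if } (i,j) \in E, \\ \lambda\,(1 - x_{ij}) & \text{if } (i,j) \notin E, \end{cases}
\end{equation*}
so that $\sum_{i<j} b_{ij}$ is exactly the LambdaSTC LP objective value, which is a lower bound on $\lamcc$ (since LambdaSTC LP is a relaxation whose objective matches LambdaCC's objective, as noted in the derivation of~\eqref{eq:cheapcc}). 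Once both conditions of Theorem~\ref{thm_2.1} are checked, \textsc{Pivot} run on $\hat{G}$ produces an expected cost at most $\alpha \sum b_{ij} \le (7 - 2/\lambda)\,\lamcc$, which is what we want.

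Condition~(1) is straightforward case analysis. If $(i,j) \in \hat{E}$ then $(i,j) \in E$ so $w_{ij}^- = 0$; if $(i,j) \notin \hat{E}$ and $(i,j) \notin E$ then $w_{ij}^+ = 0$. The only nontrivial case is $(i,j) \in E \setminus \hat{E}$, where the threshold forces $x_{ij} \ge \tfrac{2\lambda}{7\lambda-2}$, so $\alpha b_{ij} = \tfrac{7\lambda-2}{\lambda}(1-\lambda) x_{ij} \ge 2(1-\lambda) \ge w_{ij}^+$. This is precisely the calculation the threshold $\tfrac{2\lambda}{7\lambda-2}$ was designed to satisfy.

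The main obstacle is Condition~(2), which requires $w_{ij}^+ + w_{jk}^+ + w_{ik}^- \le \alpha(b_{ij}+b_{jk}+b_{ik})$ for every open wedge $(i,j,k)$ centered at $j$ in $\hat{G}$. Since $\hat{E} \subseteq E$ for $\lambda \ge 1/2$, we always have $(i,j),(j,k)\in E$; the argument splits on whether $(i,k)\in E$. When $(i,k)\notin E$, the triplet is an open wedge in $G$ itself, so the LP constraint $x_{ij}+x_{jk} \ge x_{ik}$ from~\eqref{eq:cheapcc} is available. Here the LHS equals $2(1-\lambda)+\lambda = 2-\lambda$, and minimizing the RHS $\alpha[(1-\lambda)(x_{ij}+x_{jk}) + \lambda(1-x_{ik})]$ subject to $x_{ik} \le x_{ij}+x_{jk}$ and $x_{ij},x_{jk} < \tfrac{2\lambda}{7\lambda-2}$ reduces (since $\lambda \ge 1/2$ makes the coefficient of $x_{ik}$ negative) to setting $x_{ik}=x_{ij}+x_{jk}$ at its maximum value $\tfrac{4\lambda}{7\lambda-2}$; the algebra then collapses exactly to $2-\lambda$. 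When $(i,k)\in E$ but $(i,k)\notin \hat{E}$, no LP constraint links the triplet (it is a triangle in $G$), but the threshold forces $x_{ik}\ge \tfrac{2\lambda}{7\lambda-2}$ and $w_{ik}^-=0$, so $\alpha(b_{ij}+b_{jk}+b_{ik}) \ge \alpha(1-\lambda)x_{ik} \ge 2(1-\lambda)$ matches the LHS $w_{ij}^+ + w_{jk}^+ = 2(1-\lambda)$. The delicate part is that the same threshold value $\tfrac{2\lambda}{7\lambda-2}$ has to simultaneously control both sub-cases and also Condition~(1); verifying that the single choice of $\alpha = 7 - 2/\lambda$ makes all three calculations bind tightly is where the theorem is really earning its approximation factor, and I would present the open-wedge-in-$G$ sub-case first as it dictates the choice of $\alpha$.
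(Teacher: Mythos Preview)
Your proposal is correct and matches the paper's proof essentially line for line: the same budgets $b_{ij}$, the same Condition~(1) case analysis hinging on the threshold $\tfrac{2\lambda}{7\lambda-2}$, and the same two-case split for Condition~(2) (open wedge in $G$ using the LP constraint, versus triangle in $G$ using only $x_{ik}\ge\tfrac{2\lambda}{7\lambda-2}$). The paper presents the algebra in the open-wedge case slightly differently---bounding $x_{ik}<\tfrac{4\lambda}{7\lambda-2}$ first and then substituting---but the underlying argument and the tight constants are identical.
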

\begin{proof}
To prove Theorem~\ref{thm:cheap}, we show the conditions in Theorem~\ref{thm_2.1} are satisfied for $\alpha = (7-\frac{2}{\lambda})$. In our analysis, we define budgets $b_{ij}$ for each distinct pair of nodes $(i,j)$ based on the LP objective~\eqref{eq:cheapcc}. Specifically, we set $b_{ij} = (1-\lambda)x_{ij}$ if $(i,j) \in E$, and $b_{ij} = \lambda(1-x_{ij})$ if $(i,j)\notin E$. 
%As before, node-pair weights for LambdaCC are $(w_{ij}^+,w_{ij}^-) = (1-\lambda,0)$ if $(i,j) \in E$, and $(w_{ij}^+,w_{ij}^-) = (0,\lambda)$ for a non-edge in $G$.
%From the construction of $\hat{G}$, we ensure that if $(i,j) \notin E$, then $(i,j) \notin \hat{E}$. Furthermore, if $(i,j) \in E$, then $(i,j) \in \hat{E}$ only when $x_{ij} < \frac{2\lambda}{7\lambda - 2}$.
We begin by checking Condition (1) in Theorem~\ref{thm_2.1} for each distinct pair of nodes $(i,j)$, i.e.,
    \begin{align}
    	\label{eq:cond1.1}
        w_{ij}^- \le \alpha b_{ij} \text{   for all } (i,j) \in \hat{E}\\
        \label{eq:cond1.2}
        w_{ij}^+ \le \alpha b_{ij} \text{   for all } (i,j) \notin \hat{E}.
    \end{align}
    If $(i,j)\in \hat{E}\cap E$, then $w_{ij}^-=0$ so Condition~\eqref{eq:cond1.1} holds. Similarly, when $(i,j)\notin \hat{E}$ and $(i,j) \notin E$, then $w_{ij}^+=0$, trivially satisfying Condition \eqref{eq:cond1.2}. By the construction of $\hat{G}$, it is impossible for $(i,j)\in\hat{E}$ if $(i,j) \notin E$. So the last case to consider is when $(i,j)\notin\hat{E}$ and $(i,j) \in E$, in which case 
%    $w_{ij}^+ = (1-\lambda)$, $b_{ij}=(1-\lambda)x_{ij}$, and 
$x_{ij}\ge \frac{2\lambda}{7\lambda - 2}$, so
    \begin{align*}
        w_{ij}^+ = (1-\lambda) < \Big(7-\frac{2}{\lambda}\Big)(1-\lambda)\Big(\frac{2\lambda}{7\lambda - 2} \Big) \leq \alpha(1-\lambda)x_{ij} = \alpha b_{ij}.
    \end{align*}
%        This concludes the proof of Condition (1) of Theorem~\ref{thm_2.1}.
        
    Next, for every triplet $(i,j,k)$ such that $(i,j)\in \hat{E}$, $(j,k)\in \hat{E}$ and $(i,k)\notin \hat{E}$, we need to check that 
    \begin{equation}
    	\label{cond2}
    	w_{ij}^+ + w_{jk}^+ + w_{ik}^- \leq \alpha(b_{ij}+b_{jk}+b_{ik}).
    \end{equation}
	By our construction of $\hat{G}$, if $(i,j)$ and $(j,k) \in \hat{E}$, then $(i,j)$ and $(j,k)$ are both edges in $G$ as well. Since $(i,k) \notin \hat{E}$, $(i,k)$ may or may not be an edge in $G$, we prove~\eqref{cond2} by considering two cases.
 
    \noindent\textbf{Case 1:} $(i,k) \notin E$. Here we have $(b_{ij},b_{jk},b_{ik}) = ((1-\lambda)x_{ij},(1-\lambda)x_{jk},\lambda (1-x_{ik}))$ and $(w_{ij}^+,w_{jk}^+,w_{ik}^-) = (1-\lambda,1-\lambda,\lambda)$. Using the open wedge inequality $x_{ik}\le x_{ij}+x_{jk}$ and the fact that both $x_{ij},x_{jk} < \frac{2\lambda}{7\lambda - 2}$, we know $x_{ik} < \frac{4\lambda}{7\lambda - 2}$. Therefore,
	\begin{align*}
 	\alpha(b_{ij}+b_{jk}+b_{ik})  
% 	&=\alpha((1-\lambda)x_{ij} + (1-\lambda)x_{jk}+ 
%  \lambda (1-x_{ik}))\\
  &=\alpha((1-\lambda)(x_{ij}+x_{jk}) +\lambda(1-x_{ik}))\\
  &\geq \alpha((1-\lambda)x_{ik} + \lambda -\lambda x_{ik}) \\
  &\geq \alpha((1-2\lambda)x_{ik} + \lambda)\\
  &> \Big(\frac{7\lambda-2}{\lambda}\Big)\Big((1-2\lambda)\Big(\frac{4\lambda}{7\lambda-2}\Big) + \lambda\Big) \\
  &= (2-\lambda) = w^+_{ij}+w_{jk}^+ +w^-_{ik}.
	\end{align*}
	\textbf{Case 2:} $(i,k) \in E$. In this case, $(b_{ij},b_{jk},b_{ik}) = ((1-\lambda)x_{ij},(1-\lambda)x_{jk},(1-\lambda)x_{ik})$ and $(w_{ij}^+,w_{jk}^+,w_{ik}^-) = (1-\lambda,1-\lambda,0)$. Then,
	\begin{align*}
 	\alpha(b_{ij}+b_{jk}+b_{ik}) &=\alpha((1-\lambda)x_{ij} + 
(1-\lambda)x_{jk}+ (1-\lambda)x_{ik})\\
&\geq \frac{7\lambda-2}{\lambda}\left((1-\lambda)\frac{2\lambda}{7\lambda -2}\right) \\&= 2(1-\lambda) = 
 w^+_{ij}+w_{jk}^+ +w^-_{ik}.
	\end{align*}  
\end{proof}
% \textbf{Approximations for $\lambda < 1/2$.} 
Gleich et al.~\cite{Veldt2018ccgen} showed that for small enough $\lambda$, the canonical LambdaCC LP relaxation has an $O(\log n)$ integrality gap, but showed how to round that LP to obtain a $\frac{1}{\lambda}$-approximation, which is better than $O(\log n)$ for all $\lambda = \omega(1/\log n)$. These previous results rule out the possibility of obtaining an approximation better than $O(\log n)$ for arbitrarily small $\lambda$ by rounding the (looser) LambdaSTC LP relaxation. However, we can show that Algorithm~\ref{alg:labelcclp} will still provide a $1 + 1/\lambda$ approximation, which is very close to the approximation factor obtained by Gleich et al.~\cite{Veldt2018ccgen} for rounding a much more expensive LP.
% Obtaining a constant factor approximation for arbitrarily small $\lambda$ by rounding the LambdaSTC LP relaxation is not possible, since even the tighter canonical LP relaxation has an $O(\log n)$ integrality gap for sufficiently small $\lambda$~\cite{Veldt2018ccgen}. Nevertheless, we can still provide non-trivial guarantees in this setting by rounding the LambdaSTC relaxation, which are in fact quite close to the guarantees obtained by Gleich et al.~\cite{Veldt2018ccgen} when rounding the canonical LP in this regime.
% For approximating LambdaCC when $\lambda<1/2$, we follow a similar approach as in Algorithm~\ref{alg:labelcclp}, except for the rounding technique used to construct $\hat{G}$. Here we say that if an edge $(i,j)$ is present in $G$, then $(i,j)$ is also present in $\hat{G}$. And if $(i,j) \notin E$, then it is absent in $\hat{E}$ only if $x_{ij}\ge \frac{\lambda}{1+\lambda}$. Running \textsc{Pivot} on the resulting $\hat{G}$ then gives a $\left(\frac{1+\lambda}{\lambda}\right)$ - approximate clustering.
\begin{theorem}
\label{thm:cheap}
	 When $\lambda < 1/2$, Algorithm~\ref{alg:labelcclp} is a randomized $(1 + 1/\lambda)$-approximation algorithm for LambdaCC. 
\end{theorem}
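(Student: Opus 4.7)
The plan is to mirror the proof strategy used for the $\lambda \ge 1/2$ case (Theorem~\ref{thm:cheap}), invoking Theorem~\ref{thm_2.1} with $\alpha = 1 + 1/\lambda$ and with the same LP-based budgets: $b_{ij} = (1-\lambda)x_{ij}$ if $(i,j) \in E$ and $b_{ij} = \lambda(1-x_{ij})$ otherwise. Summed over pairs, $\sum_{i<j} b_{ij}$ equals the optimal LambdaSTC LP value, which lower bounds $\lamcc$. So if the two conditions of Theorem~\ref{thm_2.1} hold with $\alpha = 1+1/\lambda$, running \textsc{Pivot} on $\hat{G}$ gives a clustering of expected cost at most $(1+1/\lambda)\lamcc$. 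The key structural change from the $\lambda \ge 1/2$ case is that now $\hat{E}$ contains every edge of $E$, plus those non-edges with $x_{ij} < \frac{\lambda}{1+\lambda}$. Thus a pair in $\hat{E}^c$ is automatically a non-edge of $G$ with $x_{ij} \ge \frac{\lambda}{1+\lambda}$, and a pair in $\hat{E} \setminus E$ is a non-edge with $x_{ij} < \frac{\lambda}{1+\lambda}$.

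To verify Condition~(1), the only nontrivial subcase is $(i,j) \in \hat{E} \setminus E$, for which $w_{ij}^- = \lambda$ and $b_{ij} = \lambda(1-x_{ij}) > \lambda/(1+\lambda)$; then $\alpha b_{ij} > 1 \ge \lambda$. The other three subcases either have $w_{ij}^\pm = 0$ or are impossible by construction (an edge of $G$ must be in $\hat{E}$).

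For Condition~(2), consider an open wedge $(i,j,k)$ centered at $j$ in $\hat{G}$. Since $(i,k) \notin \hat{E}$, we have $(i,k) \notin E$ and $x_{ik} \ge \frac{\lambda}{1+\lambda}$. Split into four cases according to whether $(i,j)$ and $(j,k)$ lie in $E$. Three of the four cases are easy: whenever one of $(i,j)$ or $(j,k)$ lies in $\hat{E}\setminus E$, the corresponding budget already exceeds $\lambda/(1+\lambda)$, so $\alpha$ times that single budget already dominates $w_{ij}^+ + w_{jk}^+ + w_{ik}^-$ (which is at most $1$ in those cases).

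The main obstacle, and the reason the threshold $\frac{\lambda}{1+\lambda}$ is chosen as it is, is the remaining case in which both $(i,j)$ and $(j,k)$ lie in $E$ (so $(i,j,k)$ is an open wedge of $G$ and the LP triangle constraint $x_{ij} + x_{jk} \ge x_{ik}$ is active). Here the left side is $2(1-\lambda) + \lambda = 2-\lambda$, while
\begin{equation*}
\alpha(b_{ij} + b_{jk} + b_{ik}) \;=\; \tfrac{1+\lambda}{\lambda}\bigl((1-\lambda)(x_{ij}+x_{jk}) + \lambda(1-x_{ik})\bigr).
\end{equation*}
Applying the LP constraint to replace $x_{ij}+x_{jk}$ by $x_{ik}$ gives a lower bound of $\frac{1+\lambda}{\lambda}\bigl((1-2\lambda)x_{ik} + \lambda\bigr)$; since $1-2\lambda > 0$ for $\lambda < 1/2$, this is minimized at $x_{ik} = \frac{\lambda}{1+\lambda}$, and a brief calculation shows the minimum equals exactly $2-\lambda$. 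The threshold is thus calibrated so Condition~(2) is tight in this case, and together with Theorem~\ref{thm_2.1} this completes the argument.
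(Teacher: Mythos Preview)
Your proposal is correct and follows essentially the same approach as the paper: invoke Theorem~\ref{thm_2.1} with $\alpha = (1+\lambda)/\lambda$ and the LP-based budgets, verify Condition~(1) via the same four subcases, and verify Condition~(2) by splitting on the $E$-membership of $(i,j)$ and $(j,k)$, with the open-wedge constraint used in the both-in-$E$ case. The only cosmetic difference is that you handle the three ``easy'' cases of Condition~(2) with one unified argument (a single budget term already exceeds $\lambda/(1+\lambda)$, so $\alpha$ times it exceeds $1 \ge w_{ij}^+ + w_{jk}^+ + w_{ik}^-$), whereas the paper writes them out separately.
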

\begin{proof}
    We prove the result by showing that the conditions of Theorem~\ref{thm_2.1} are satisfied with $\alpha = (1+ \lambda)/\lambda$. Condition (1) of this theorem is easy to satisfy for a node pair $(i,j)$ if $(i,j)\in E \cap \hat{E}$, since $w_{ij}^- = 0$. It is similarly easy to satisfy if $(i,j)\notin E$ and $(i,j) \notin \hat{E}$ since then $w_{ij}^+ = 0$. The construction of $\hat{G}$ ensures it is impossible for  $(i,j)\in E$ if $(i,j)\notin \hat{E}$. If $(i,j)\notin E$ but $(i,j) \in \hat{E}$, then we know $w_{ij}^- = \lambda $ and $b_{ij} = \lambda (1-x_{ij})$ and $x_{ij} < \frac{\lambda}{1+\lambda}$. Thus,
% \begin{align*}
%     w_{ij}^- = \lambda = \left( \frac{1+\lambda}{\lambda}\right)\lambda \left( \frac{\lambda}{1+\lambda}\right) > \alpha \lambda \alpha b_{ij},
% \end{align*}
\begin{align*}
    \alpha b_{ij} = \left(\frac{1+\lambda}{\lambda} \right)\lambda (1-x_{ij}) > (1+\lambda)\left(1- \frac{\lambda}{1+\lambda} \right) = 1 > \lambda = w_{ij}^-,
\end{align*}
which proves Condition (1) of Theorem~\ref{thm_2.1}.

Next, we prove Condition (2) for every triplet $\{i,j,k\}$ that defines an open wedge centered at $j$ in $\hat{G}$, i.e., $(i,j) \in \hat{E}$, $(j,k) \in \hat{E}$, and $(i,k) \notin \hat{E}$. We know that $(i,k) \notin E$ and $x_{ik} \geq \frac{\lambda}{1+\lambda}$, or else by the construction of $\hat{G}$ we would have $(i,k) \in \hat{E}$. Node pairs $(i,j)$ and $(j,k)$ may or may not be edges in $G$, so we separately consider 4 cases in proving $w_{ij}^+ + w_{jk}^+ + w_{ik}^- \leq \alpha(b_{ij}+b_{jk}+b_{ik})$.
    
\noindent\textbf{Case 1:} When $(i,j)\in E$ and $(j,k)\in E$, we have
% \begin{align*}
$(b_{ij},b_{jk},b_{ik}) = ((1-\lambda)x_{ij},(1-\lambda)x_{jk},\lambda (1-x_{ik}))$.
    % (w_{ij}^+,w_{jk}^+,w_{ik}^-) &= (1-\lambda,1-\lambda,\lambda).
% \end{align*}
The triplet $(i,j,k)$ is also an open wedge in $G$, so we have
\begin{align*}
 	\alpha(b_{ij}+b_{jk}+b_{ik})  
  % &=\alpha((1-\lambda)x_{ij} + (1-\lambda)x_{jk}+ 
  % \lambda (1-x_{ik}))\\
  &=\alpha((1-\lambda)(x_{ij}+x_{jk}) +\lambda(1-x_{ik}))\\
  &\geq \alpha((1-\lambda)x_{ik} + \lambda -\lambda x_{ik}) \\
  &= \alpha((1-2\lambda)x_{ik} + \lambda)\\
  &\geq \left(\frac{1+\lambda}{\lambda}\right)\left((1-2\lambda)\left(\frac{\lambda}{1+\lambda}\right) + \lambda\right) \\
  &= 2-\lambda = w^+_{ij}+w_{jk}^+ +w^-_{ik}.
	\end{align*}
\textbf{Case 2:} When $(i,j)\in E$ and $(j,k)\notin E$, we have
% \begin{align*}
    $(b_{ij},b_{jk},b_{ik}) = ((1-\lambda)x_{ij},\lambda(1-x_{jk}),\lambda (1-x_{ik}))$
% \end{align*}
% and we have 
% $(w_{ij}^+,w_{jk}^+,w_{ik}^-) = (1-\lambda,0,\lambda)$ and 
and we know $x_{jk} < \frac{\lambda}{1+\lambda} \implies (1-x_{jk}) > 1-\frac{\lambda}{1+\lambda} = \frac{1}{1+\lambda}$. Then,
\begin{align*}
\alpha(b_{ij}+b_{jk}+b_{ik})  &\geq \alpha b_{jk} = \alpha \lambda (1- x_{jk}) \geq \left(\frac{1+ \lambda}{\lambda}\right) \lambda \left(\frac{1}{1+\lambda}\right) \\
&=1 = w_{ij}^+ + w_{jk}^+ + w_{ik}^-.
  %   \alpha(b_{ij}+b_{jk}+b_{ik})  &=\alpha((1-\lambda)x_{ij} + \lambda(1-x_{jk})+ 
  % \lambda (1-x_{ik}))\\
  % &\ge \alpha(\lambda(1- \frac{\lambda}{1+\lambda}))\\
  % &\ge \left( \frac{1+\lambda}{\lambda}\right)\left( \frac{\lambda}{1+\lambda}\right)=1 = w_{ij}^+ + w_{jk}^+ + w_{ik}^-.
\end{align*}
\textbf{Case 3:} When $(i,j)\notin E$ and $(j,k)\in E$, this is symmetric to \textbf{Case 2}.

\noindent\textbf{Case 4:} When $(i,j)\notin E$ and $(j,k)\notin E$, we have
% \begin{align*}
    $(b_{ij},b_{jk},b_{ik}) = (\lambda(1-x_{ij}),\lambda(1-x_{jk}),\lambda (1-x_{ik}))$,
    % (w_{ij}^+,w_{jk}^+,w_{ik}^-) &= (0,0,\lambda)
% \end{align*}
and both $x_{ij}$ and $x_{jk}$ are strictly less than $\frac{\lambda}{1+\lambda}$, so
\begin{align*}
     \alpha(b_{ij}+b_{jk}+b_{ik})  &=\alpha(\lambda(1-x_{ij})+\lambda(1-x_{jk})+\lambda(1-x_{ik}))\\
     &> \alpha 2\lambda \left(1-\frac{\lambda}{1+\lambda}\right)
     =2 \ge \lambda = w_{ij}^+ + w_{jk}^+ +w_{ik}^-.
\end{align*}

\end{proof}

\subsection{A 3-approximation via an intermediate LP}
\label{sec:intermediate}
Veldt et al.~\cite{veldt2018correlation,veldt2017unifying} originally presented a 3-approximation algorithm for $\lambda \ge 1/2$ based on the canonical LP relaxation. This algorithm however comes with an $O(n^3)$ size constraint matrix since all triangle inequality constraints are considered for all triplet of nodes $(i,j,k) \in {V \choose 3}$. In contrast, in the previous section, we proposed a faster 6-approximation algorithm based on the LambdaSTC LP relaxation that 
includes a triangle inequality constraint $x_{ik} \leq x_{jk} + x_{ij}$ only when $(i,j,k)$ an open wedge (centered at $j$) in $G$. In this section, we show how to obtain a 3-approximation by rounding an LP relaxation 
whose constraint set lies somewhere between the LambdaSTC and canonical LambdaCC LP relaxation.
% with more constraints than the LambdaSTC LP but fewer than the canonical LP. 
In more detail, we include a triangle inequality constraint for every wedge in $G$ as well as for every \emph{triangle} in $G$. This is a superset of the constraint set for the LambdaSTC LP relaxation but does not include a triangle inequality constraint for every $\{i,j,k\}$. Formally, this LP relaxation is given by
% we make an alternative observation that allows us to achieve a 3-approximation algorithm. We achieve this by rounding the LambdaSTC LP and introducing an additional triangle inequality constraint whenever we encounter a triangle in the graph, where all node-pairs $(i,j),(j,k),(i,k)$ are edges in $E$. As a result, we modify the LP formulation as follows:
\begin{equation}
	\label{eq:3cc} 
	\begin{aligned}
		\text{ min}\quad &  {\textstyle \sum_{(i,j)\in E}(1-\lambda)x_{ij} + \sum_{(i,j)\notin E}\lambda(1-x_{ij})}\\
		\text{s.t.} \quad &  x_{ij}+ x_{jk} \ge x_{ik} \text{ if } (i,j,k) \in \mathcal{W}_j \text{ or } (i,j,k) \in \mathcal{T}_j \\
		& 0\le x_{ij} \le 1 \text{ for all } (i,j) \in {\textstyle {V \choose 2}   }
	\end{aligned}
\end{equation}
where $\mathcal{T}_j$ represents a triangle that includes node $j$ as one of its vertices. Algorithm~\ref{alg:3cclp} uses the same rounding strategy that was used for the canonical LP relaxation~\cite{veldt2018correlation}, except that it is applied to the LP in equation~\eqref{eq:3cc} rather than the canonical LP. 

\begin{algorithm}[tb]
\label{alg:3cclp}
\caption{Rounding the LP~\eqref{eq:3cc} into a clustering}
\label{alg:3cclp}
	\begin{algorithmic}[1]
		\STATE{\bfseries Input:} Undirected graph $G=(V,E) ; \lambda \ge 1/2$ 
		\STATE {\bfseries Output:} Clustering of $G$.
		\STATE Solve LP~\eqref{eq:3cc} and obtain fractional $x_{ij}$ values
		\STATE Construct $\hat{G} = (V, \hat{E})$ where 
		\begin{equation*}
            \hat{E} = \{(i,j): x_{ij} < 1/3 \}           
		\end{equation*}
		\STATE Return $\textsc{Pivot}(\hat{G})$
	\end{algorithmic}
\end{algorithm}
The LP relaxation presented here has a constraint size determined by the number of wedges and triangles, denoted as $|\mathcal{W}|+|\mathcal{T}|$, in the graph. While both $|\mathcal{W}|$ and $|\mathcal{T}|$ can potentially be $O(n^3)$ in the worst case, this is not typically the scenario in practical situations. In real-world networks, the number of wedges and triangles is significantly smaller. Figure~\ref{fig:1} illustrates this observation by comparing the number of constraints in the intermediate LP~\eqref{eq:3cc} against the canonical LP.  Thus, solving and rounding this LP is more efficient compared to existing techniques, and we now prove that this can be done without a loss in the approximation factor.
\begin{theorem}
    Algorithm~\ref{alg:3cclp} is a randomized 3-approximation algorithm for LambdaCC when $\lambda\ge1/2$.
\end{theorem}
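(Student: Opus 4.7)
The plan is to invoke Theorem~\ref{thm_2.1} with $\alpha = 3$, using the same LP-based budgets as in the proof of Theorem~\ref{thm:cheap}: set $b_{ij} = (1-\lambda)x_{ij}$ if $(i,j)\in E$ and $b_{ij} = \lambda(1-x_{ij})$ otherwise, where the $x_{ij}$ are the fractional variables returned by LP~\eqref{eq:3cc}. Since the LP objective equals $\sum_{i<j} b_{ij}$ and lower bounds $\lamcc$, an $\alpha$-bound from Theorem~\ref{thm_2.1} immediately yields a randomized 3-approximation.

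For Condition (1), the cases $(i,j)\in \hat{E}\cap E$ and $(i,j)\notin \hat{E}\cup E$ are immediate since the relevant weight is zero. In the two remaining cases, the definition of $\hat{E}$ pins down $x_{ij}$: if $(i,j)\notin E$ but $(i,j)\in \hat{E}$ then $x_{ij}<1/3$, so $3b_{ij} = 3\lambda(1-x_{ij}) > 2\lambda \geq \lambda = w_{ij}^-$; and if $(i,j)\in E$ but $(i,j)\notin\hat{E}$ then $x_{ij}\geq 1/3$, so $3b_{ij}=3(1-\lambda)x_{ij}\geq(1-\lambda)=w_{ij}^+$. These are routine.

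The main obstacle is Condition (2) for an open wedge $(i,j,k)$ in $\hat{G}$, meaning $x_{ij},x_{jk}<1/3$ and $x_{ik}\geq 1/3$. The key observation, and the whole reason LP~\eqref{eq:3cc} suffices, is that the triangle inequality $x_{ij}+x_{jk}\geq x_{ik}$ is guaranteed by the LP in precisely the case where we actually need it. I would split into cases on how many of $\{(i,j),(j,k)\}$ lie in $E$. When both $(i,j),(j,k)\in E$, the triple $(i,j,k)$ forms either an open wedge centered at $j$ (if $(i,k)\notin E$) or a triangle in $G$ (if $(i,k)\in E$), so the constraint $x_{ij}+x_{jk}\geq x_{ik}$ is present in LP~\eqref{eq:3cc}. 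I would then bound $3(b_{ij}+b_{jk}+b_{ik})$ from below by substituting $x_{ij}+x_{jk}\geq x_{ik}$ and using $\lambda\geq 1/2$: in the triangle subcase, this gives $3(1-\lambda)(x_{ij}+x_{jk}+x_{ik})\geq 3(1-\lambda)\cdot 2x_{ik}\geq 2(1-\lambda)$, matching the weight sum; in the wedge subcase, it gives $3((1-2\lambda)x_{ik}+\lambda)$, which since $1-2\lambda\leq 0$ is minimized at the worst feasible $x_{ik}\leq 2/3$ and evaluates to $2-\lambda$, again matching.

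When at most one of $(i,j),(j,k)$ lies in $E$, no triangle inequality is needed because the non-edge itself supplies a large budget. Concretely, if $(j,k)\notin E$ (and $(j,k)\in\hat{E}$), then $b_{jk}=\lambda(1-x_{jk})>2\lambda/3$, so $3(b_{ij}+b_{jk}+b_{ik})>2\lambda\geq 1$, which dominates $w_{ij}^++w_{jk}^++w_{ik}^-\leq 1$ in every subcase of this type; the fully non-edge case $(i,j),(j,k)\notin E$ is even easier, giving a lower bound of $4\lambda$ against a weight sum of at most $\lambda$. Assembling these cases gives Condition (2) with $\alpha=3$, and Theorem~\ref{thm_2.1} yields the claimed 3-approximation.
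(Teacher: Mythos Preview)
Your proposal is correct and follows essentially the same approach as the paper: invoke Theorem~\ref{thm_2.1} with $\alpha=3$ using the LP-based budgets, verify Condition~(1) by the four edge/$\hat{E}$ cases, and verify Condition~(2) by a case analysis on the edge structure of $(i,j,k)$ in $G$, using the triangle inequality from LP~\eqref{eq:3cc} exactly when $(i,j),(j,k)\in E$ (so the triple is a wedge at $j$ or a triangle). The only difference is organizational: the paper enumerates all eight $G$-edge configurations and defers several of them to the earlier proof of Veldt et al., whereas you group cases by how many of $\{(i,j),(j,k)\}$ lie in $E$ and give a self-contained argument; your observation that a single non-edge among $\{(i,j),(j,k)\}$ already contributes budget exceeding $2\lambda/3$ (hence $3b_{jk}>2\lambda\ge 1$) cleanly handles all the ``no triangle inequality available'' cases at once.
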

\begin{proof}
    We can prove that Algorithm~\ref{alg:3cclp} satisfies Theorem~\ref{thm_2.1} by making slight modifications to the proof presented in Theorem 6 in the work of Veldt et al.~\cite{veldt2017unifying}. Condition (1) of Theorem~\ref{thm_2.1} can be satisfied following the proof as is in~\cite{veldt2017unifying}. To prove Condition (2), we demonstrate that 
    \begin{equation}
        \label{ijkcon}
        w_{ij}^++w_{jk}^++w_{ik}^- \le \alpha(b_{ij}+b_{jk}+b_{ik})
    \end{equation} for every triplet of nodes $(i,j,k)$ that is mapped to an open wedge centered at $j$ in $\hat{G}$. This means that $x_{ij},x_{jk} < 1/3$, $x_{ik}\ge 1/3$.
    Note that we are only able to apply the triangle inequality $x_{ik}\le x_{ij}+x_{jk}$ if $(i,j,k)$ is also an open wedge or a triangle in the original graph $G$.
    Given an arbitrary triplet $(i,j,k)$ that maps to an open wedge in $\hat{G}$, there are 8 possibilities for the edge structure in $G$, depending on which pairs of nodes in $(i,j,k)$ share an edge in $G$. Following Veldt et al.~\cite{veldt2017unifying}, we can prove inequality~\eqref{ijkcon} for each of the 8 cases separately. 
    Note that we do not need to update the analysis for cases where $(i,j,k)$ is an open wedge or a triangle in $G$, since our new LP in~\eqref{eq:3cc} includes triangle inequality constraints for these cases. 
    This means that the following cases from the analysis of Veldt et al.~\cite{veldt2017unifying} remain unchanged:
    % Considering that each node-pair among the nodes ${i,j,k}$ may or may not share an edge in $G$, we have a total of 8 cases. The cases involving a wedge and a triangle have already been proven in Theorem 6 of Veldt et al.~\cite{veldt2017unifying}. These include:
    \begin{itemize}
        \item \textbf{Case 1}: $(i,j,k)$ forms a wedge centered at $j$ in $G$.
        \item \textbf{Case 5} and \textbf{Case 6:} $(i,j,k)$ forms a wedge centered at either $i$ or $k$.
        \item \textbf{Case 8:} $(i,j,k)$ forms a triangle.
    \end{itemize}
For \textbf{Case 7}, where $(i,k) \in E$, $(i,j)\notin E$, and $(j,k) \notin E$, the proof is trivial since $w_{ij}^+ +w_{jk}^+ +w_{ik}^- = 0$. We update the proof for the remaining cases as follows:

    \noindent\textbf{Case 2:} When $(i,j)\in E,(j,k)\notin E,(i,k)\notin E$, we have $(b_{ij},b_{jk},b_{ik})=((1-\lambda)x_{ij},\lambda(1-x_{jk}),\lambda(1-x_{ik}))$ and $(w_{ij}^+,w_{jk}^+,w_{ik}^-) = (1-\lambda,0,\lambda)$. Thus,
    \begin{align*}
        \alpha(b_{ij}+b_{jk}+b_{ik}) &= \alpha((1-\lambda)x_{ij}+\lambda(1-x_{jk})+\lambda(1-x_{ik}))\\
        &\ge 3(\lambda-\lambda x_{jk}) > 3(\lambda - \lambda/3)\\
        &=2\lambda \ge 1 = w_{ij}^++w_{jk}^++w_{ik}^-.
    \end{align*}
    \noindent\textbf{Case 3:} When $(i,j)\notin E,(j,k)\in E,(i,k)\notin E$ is symmetric to \textbf{Case 2} and the same result holds.
     
    \noindent\textbf{Case 4:} When $(i,j)\notin E,(j,k)\notin E,(i,k) \notin E$, we have 
    $(b_{ij},b_{jk},b_{ik})=(\lambda(1-x_{ij}),\lambda(1-x_{jk}),\lambda(1-x_{ik}))$ and $(w_{ij}^+,w_{jk}^+,w_{ik}^-) = (0,0,\lambda)$. Then,
    \begin{align*}
        \alpha(b_{ij}+b_{jk}+b_{ik}) &= \alpha(\lambda(1-x_{ij})+\lambda(1-x_{jk})+\lambda(1-x_{ik}))\\
        &>3\lambda(2/3 + 2/3)\\ 
        &= 3\lambda(4/3) = 4\lambda > \lambda = w_{ij}^++w_{jk}^++w_{ik}^-.
    \end{align*}
\end{proof}
Therefore, considering all the cases, we can conclude that Theorem~\ref{thm_2.1} holds for $\alpha = 3$, satisfying the 3-approximation guarantee.

\subsection{Runtime Analysis}
For a graph $G = (V,E)$, let $m = |E|$ and $n = |V|$. When written in the form $\min_{\textbf{A}\textbf{x} = \textbf{b}} \textbf{c}^T \textbf{x}$, the canonical LP relaxation for LambdaCC has $O(n^3)$ constraints and variables. Even using recent theoretical algorithms for solving linear programs in matrix multiplication time~\cite{cohen2021solving,jiang2021faster}, the runtime is $\Omega((n^3)^\omega)$ where $\omega$ is the matrix multiplication exponent, so the runtime for solving and rounding the canonical relaxation is $\Omega(n^6)$. Not only does this have a prohibitively expensive runtime, but in practice even forming such a large constraint matrix can lead to memory issues that make it infeasible to solve this on a very large scale. Thus, although this approach provides the best theoretical approximation factor, it is not scalable. 

Our new approximation algorithms come with good approximation guarantees and are much faster than solving the canonical relaxation, both in theory and practice. Finding the open wedges of $G$ can be done in time $O(\sum_{v \in V} d_v^2)$ by visiting each node, and then visiting each pair of neighbors of that node in turn. This runtime is upper bounded by $O(mn)$. When applying the randomized \textsc{Pivot} algorithm, this is in fact the most expensive part of CFP, so the overall runtime for CFP is $O(\sum_{v \in V} d_v^2) = O(nm)$. If we use the deterministic pivoting strategy of van Zuylen and Williamson~\cite{vanzuylen2009deterministic}, this can be implemented in $O(n^3)$ time so that is the runtime for a derandomized version of CFP.

The LambdaSTC LP is a covering LP, so for $\varepsilon \geq 0$ we can find a $(1+\varepsilon)$-approximate solution in $\tilde{O}(\frac{1}{\varepsilon^2}|\mathcal{W}|)$ time using the multiplicative weights update method~\cite{quanrud2020nearly,garg2004fractional,fleischer2004fast}, where $\tilde{O}$ suppresses logarithmic factors. This assumes we already know $|\mathcal{W}|$; if we factor in the time it takes to find all open wedges the runtime comes to $\tilde{O}(\frac{1}{\varepsilon^2}|\mathcal{W}| + \sum_{v \in V} d_v^2)$. Minor alteration to our analysis quickly shows that a $(1+\varepsilon)$-approximate solution to the LP translates to approximation factors that are a factor $(1+\varepsilon)$ larger. Once again, applying the deterministic pivot selection adds $O(n^3)$ to the runtime, which is still far better than $\Omega(n^6)$.
\section{Experiments}
This section presents a performance of our algorithms.
%
% comparison of three algorithms: our cheaper LP, the combinatorial CoverFlipPivot algorithm, and the canonical LP relaxation, for various values of $\lambda$ (change here). While our primary focus is on analyzing the algorithms for $\lambda \geq 1/2$, we also include results obtained for smaller values of lambda. Our results demonstrate that the Algorithm~\ref{alg:labelcclp} using LambdaSTC LP is more efficient than the canonical LP which faces out of memory problems, while also providing a feasible LambdaCC clustering solution for some graphs. Meanwhile, the combinatorial CFP method consistently executes much more quickly than any LP relaxation for any value of $\lambda$ and even for large graphs. 
%
We conduct experiments on publicly available datasets from various domains, including the SNAP~\cite{snap} and Facebook100~\cite{facebook} datasets, which are available at the Suitsparse matrix collection~\cite{suitesparse2011davis}. To implement the algorithms, we use the Julia programming language, and we run all experiments on a Dell XPS machine with 16 GB RAM and an Intel Core i7 processor. Both the canonical and the LambdaSTC LP relaxations are solved using Gurobi optimization software~\cite{gurobi2021gurobi}. We focus here on finding exact solutions for the LambdaSTC LP relaxation using existing optimization software. This is already far more scalable than trying to form the constraint matrix for the canonical LP relaxation and solve it using Gurobi. Finding faster approximate solutions for the LP using the multiplicative weights update method is a promising direction for future research, but is beyond the scope of the current paper. Code for our implementations and experiments is available at~\url{https://github.com/Vedangi/FastLamCC}.

\subsection{Approximation algorithms for LambdaCC}
A natural question to ask is how well our approximation algorithms compare against previous algorithms for LambdaCC based on the canonical LP relaxation. It is worth noting first of all that even forming the full constraint matrix for the canonical LP (which has $n(n-1)(n-2)/2 = O(n^3)$ triangle inequality constraints) becomes infeasible for even modest-sized graphs due to memory constraints. Meanwhile, the LambdaSTC LP relaxation has one triangle inequality constraint for each open wedge. Although there exist graphs such that $|\mathcal{W}| = O(n^3)$, this is not the case in practical situations. Figure~\ref{fig:1} plots the size of $|\mathcal{W}|$ for all of the Facebook100 networks, as well as for a range of graphs of different classes from SNAP (e.g., social networks, citation networks, web networks, etc.). In all cases $|\mathcal{W}|$ is orders of magnitude smaller than $n(n-1)(n-2)/2$, illustrating that solving and rounding this LP is far more practical than using existing LP-based techniques. We also plot the number of constraints in the intermediate LP relaxation from Section~\ref{sec:intermediate}, showing that it has only a slight increase in constraint size over the LambdaSTC LP.

An additional reason to use the LambdaSTC relaxation is that in practice, \textit{solving the LambdaSTC relaxation often also solves the canonical LP relaxation.} This can be checked by seeing whether the optimal LP variables for the LambdaSTC relaxation are also feasible for the canonical LP.\footnote{This can also be viewed as the first step in a more memory efficient approach for solving correlation clustering LP relaxations that has been applied elsewhere~\cite{veldt2019metric,veldt2022stc}: solve the LP over a subset of the constraints and iteratively add more constraints until the variables satisfy all triangle inequalities. Our results indicate that for these graphs and $\lambda$ values, enforcing triangle inequality constraints just at open wedge is sufficient.}
Table~\ref{tab:3graphs} shows results for solving and rounding the LambdaSTC LP (Algorithm~\ref{alg:labelcclp}) on three graphs for a range of different $\lambda$ values. The graphs are Simmons81 (a social network), ca-GrQc (a collaboration network), and Polblogs (a Political blogs network). We attempted to form the full canonical LP relaxation for these graphs and solve it but quickly ran out of memory. We were able to form and solve the LambdaSTC LP relaxation, and in almost all cases the optimal solution variables for this LP were certified as being feasible (and hence optimal) for the canonical LP. Thus, our LP-based algorithm far exceeded its theoretical guarantees. In practice, it produced solutions that are within a factor of 2 or less from the LP lower bound. When rounding, we applied both our new approach (Algorithm~\ref{alg:labelcclp}) as well as the existing rounding strategy for the canonical LP relaxation, since the rounding step is very fast. We used the previous rounding strategy for the canonical LP whenever we could certify we had solved the canonical LP (since this has an improved a priori guarantee). In practice though, the results for the two different rounding strategies were nearly indistinguishable.

Table~\ref{tab:3graphs} also displays results for CFP, showing that it is even orders of magnitude faster than solving and rounding the LambdaSTC LP relaxation while producing comparable approximation ratios (ratio between clustering solution and the computed lower bound). While solving our LP relaxation takes up to hundreds of seconds on the three graphs, CFP takes mere fractions of a second.

\begin{table*}[t]
    \centering
    \caption{Results for CFP and rounding the LambdaSTC LP relaxation on three graphs. An asterisk indicates when solving the LambdaSTC relaxation produced the optimal solution for the canonical LP.}
    \label{tab:3graphs}
    % \vspace{-\baselineskip}
    \begin{tabular}{l l  l l  l l  l l  l l  } 
        \hline
        \multicolumn{2}{c}{} & \multicolumn{2}{c}{Lower Bound} & \multicolumn{2}{c}{Clustering score} & \multicolumn{2}{c}{Ratio} & \multicolumn{2}{c}{Runtime (seconds)}\\
        \cmidrule(lr){3-4}  \cmidrule(lr){5-6}   \cmidrule(lr){7-8} \cmidrule(lr){9-10}
        \emph{Graph} & $\lambda$ & CFP & LambdaSTC & CFP & LambdaSTC & CFP & LambdaSTC & CFP & LamdaSTC \\  
        \hline
        % \midrule
            & 0.4 & 2668 & 2889.7 & 6043 $\tiny{\pm 68}$
        & 4611 $\tiny{\pm 47}$ & 2.3 $\tiny{\pm 0.026}$
        & 1.6 $\tiny{\pm 0.016}$ & 0.34 $\tiny{\pm 0.096}$ &  38.3 $\tiny{\pm 0.018}$ \\
        % \midrule
         ca-GrQc   & 0.55 & 2064 & 2236.5$^*$ & 4092 $\tiny{\pm 33}$
        & 3708 $\tiny{\pm 0}$ & 2.0 $\tiny{\pm 0.016}$
        & 1.7 $\tiny{\pm 0}$ & 0.061 $\tiny{\pm 0.0084}$ &  29.7 $\tiny{\pm 0.0049}$ \\
        % \midrule
         n = 5242   & 0.75 & 1179 & 1278.2 & 2373 $\tiny{\pm 20}$
        & 2118 $\tiny{\pm 1}$ & 2.0 $\tiny{\pm 0.017}$
        & 1.7 $\tiny{\pm 0.0004}$ & 0.058 $\tiny{\pm 0.0079}$ &  27.7 $\tiny{\pm 0.006}$ \\  

         m = 14484   & 0.95 & 239 & 259.3 & 469 $\tiny{\pm 3}$
        & 430 $\tiny{\pm 0}$ & 2.0 $\tiny{\pm 0.011}$
        & 1.7 $\tiny{\pm 0}$ & 0.055 $\tiny{\pm 0.0083}$ &  25.4 $\tiny{\pm 0.0079}$ \\
        \midrule
            & 0.4 & 9823 & 9893.8$^*$ & 21569 $\tiny{\pm 72}$
        & 20674 $\tiny{\pm 110}$ & 2.2 $\tiny{\pm 0.0073}$
        & 2.1 $\tiny{\pm 0.011}$ & 0.48 $\tiny{\pm 0.1}$ &  3064.3 $\tiny{\pm 0.028}$ \\
         Simmons81     & 0.55 & 7392 & 7420.5$^*$ & 15797 $\tiny{\pm 52}$
        & 14839 $\tiny{\pm 0}$ & 2.1 $\tiny{\pm 0.0071}$
        & 2.0 $\tiny{\pm 0.0}$ & 0.25 $\tiny{\pm 0.0075}$ &  2935.4 $\tiny{\pm 0.0067}$ \\   
% \midrule
          n = 1518 & 0.75 & 4113 & 4122.5$^*$ & 8657 $\tiny{\pm 18}$
        & 8244 $\tiny{\pm 0}$ & 2.1 $\tiny{\pm 0.0044}$
        & 2.0 $\tiny{\pm 0.0}$ & 0.12 $\tiny{\pm 0.007}$ &  619.7 $\tiny{\pm 0.0023}$ \\     
         m = 32988    & 0.95 & 822 & 824.5$^*$ & 1646 $\tiny{\pm 0}$
        & 1649 $\tiny{\pm 0}$ & 2.0 $\tiny{\pm 0.0005}$
        & 2.0 $\tiny{\pm 0}$ & 0.098 $\tiny{\pm 0.0065}$ &  464.8 $\tiny{\pm 0.0028}$ \\
\midrule
            & 0.4 & 4960 & 5013.1$^*$ & 10591 $\tiny{\pm 59}$
        & 9997 $\tiny{\pm 120}$ & 2.1 $\tiny{\pm 0.012}$
        & 2.0 $\tiny{\pm 0.024}$ & 0.49 $\tiny{\pm 0.13}$ &  244.4 $\tiny{\pm 0.0018}$ \\
        
          Polblogs    & 0.55 & 3745 & 3760.2$^*$ & 7883 $\tiny{\pm 23}$
        & 7517 $\tiny{\pm 0}$ & 2.1 $\tiny{\pm 0.0062}$
        & 2.0 $\tiny{\pm 0.0}$ & 0.21 $\tiny{\pm 0.0074}$ &  217.4 $\tiny{\pm 0.00035}$ \\    
% \midrule
        n = 1222    & 0.75 & 2084 & 2089.0$^*$ & 4377 $\tiny{\pm 16}$
        & 4177 $\tiny{\pm 0}$ & 2.1 $\tiny{\pm 0.0075}$
        & 2.0 $\tiny{\pm 0.0}$ & 0.071 $\tiny{\pm 0.01}$ &  187.9 $\tiny{\pm 0.0079}$ \\      

        m = 16714    & 0.95 & 417 & 417.8$^*$ & 837 $\tiny{\pm 0}$
        & 835 $\tiny{\pm 0}$ & 2.0 $\tiny{\pm 0}$
        & 2.0 $\tiny{\pm 0}$ & 0.052 $\tiny{\pm 0.0089}$ &  114.2 $\tiny{\pm 0.0024}$ \\
        \hline
    \end{tabular}
    \label{tab:summary} 
\end{table*}

\begin{figure}
    \begin{minipage}[t]{.495\linewidth}
        \centering
        \includegraphics[width=\linewidth]{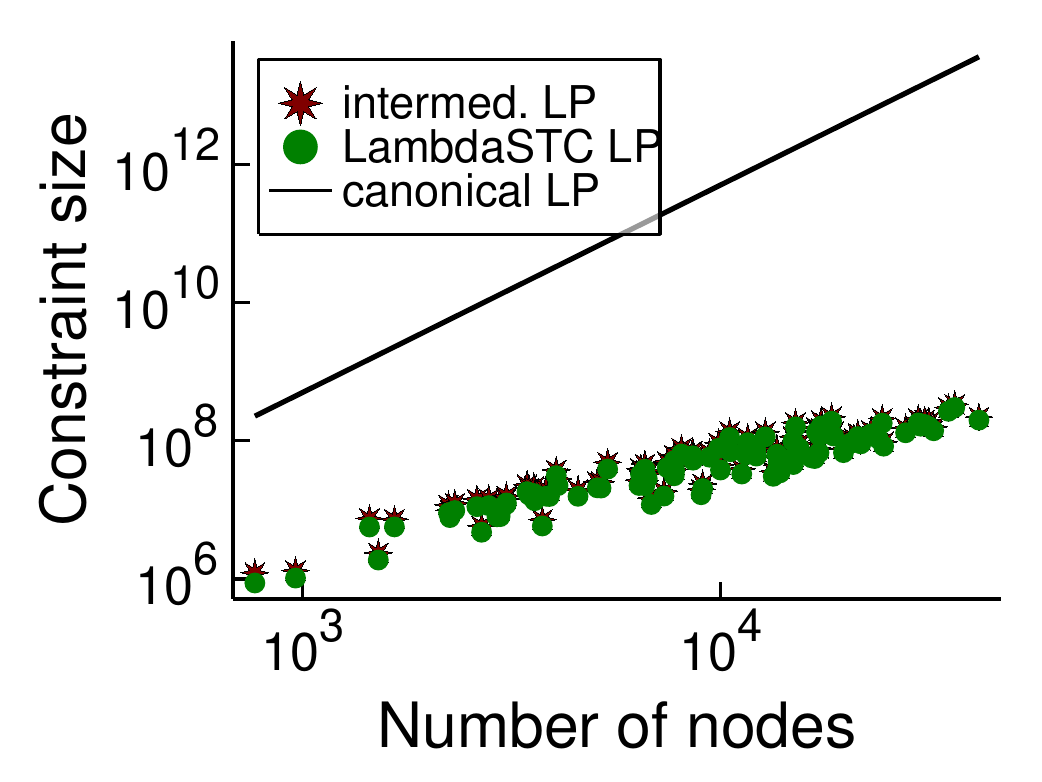}
        % \caption*{(a)}
        \label{fig:sub31}
    \end{minipage}\hfill
    \begin{minipage}[t]{.495\linewidth}
        \centering
        \includegraphics[width=\linewidth]{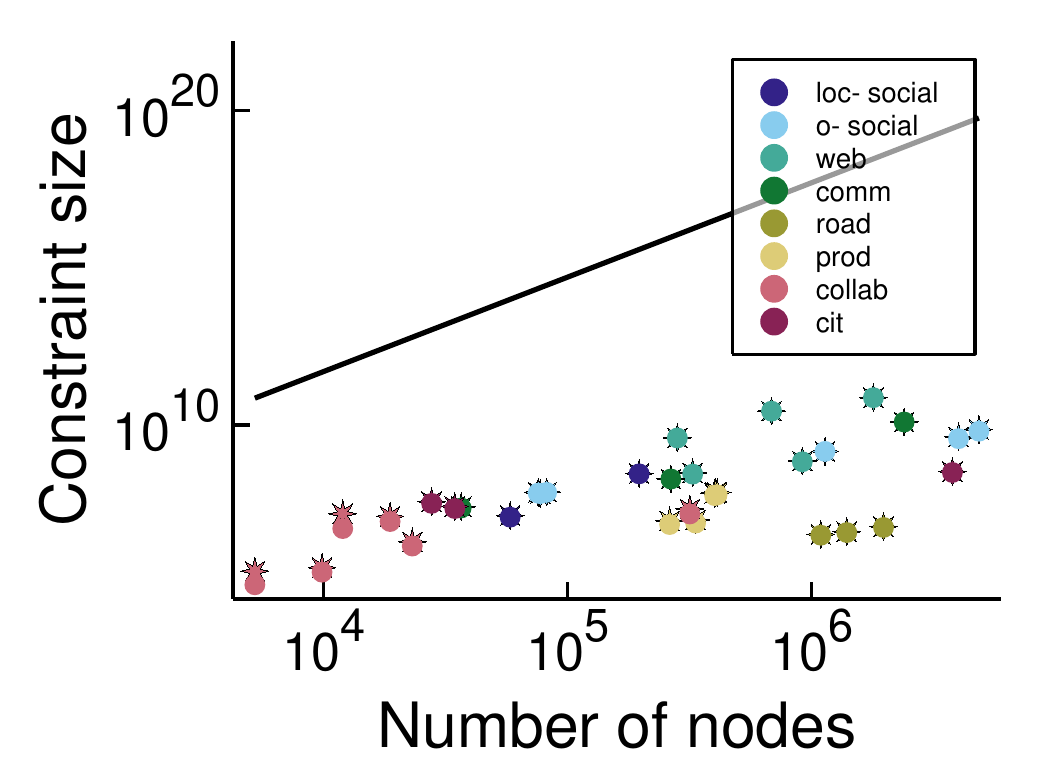}
        % \caption*{(b)}
        \label{fig:sub32}
    \end{minipage}
    \vspace{-\baselineskip}
    \caption{
    Comparing number of constraints in the canonical LP against the number of constraints in the LambdaSTC LP and the intermediate LP (from Section~\ref{sec:intermediate}) for graphs from Facebook100 (left) and SNAP datasets (right). Each dot represents the number of open wedge constraints for each graph while each star represents the number of constraints for the intermediate LP. We use the same SNAP graphs as Veldt~\cite{veldt2022stc}, and color code based on their type (location-based social, other social, web, communication, road, product, collaboration, and citation networks). }
    \label{fig:1}
\end{figure}

\subsection{Combining CFP with Fast Heuristics}
The Louvain method is a widely used heuristic clustering technique that greedily moves nodes in order to optimize a clustering objective~\cite{blondel2008fast}. The original Louvain method was designed for maximum modularity clustering, but many variations of the method have been designed. This includes a fast heuristic called LambdaLouvain~\cite{veldt2018correlation}, that greedily optimizes the LambdaCC objective for a given parameter $\lambda$, as well as a parallel version of this method~\cite{shi2021scalable}. Although these methods are fast and perform well in practice, they do not compute lower bounds for the LambdaCC objective nor provide any approximation guarantees. One benefit of our algorithms is that they come with lower bounds that can be used not only to design faster approximation algorithms for LambdaCC, but also to obtain a posteriori guarantees for other heuristic methods.
% The Louvain method is a widely used heuristic clustering technique known for maximizing modularity through a greedy approach. Veldt et al.~\cite{veldt2018correlation} introduced LambdaLouvain, a parameterized version of this method that utilizes the LambdaCC objective and its correlation with modularity. While LambdaLouvain does not offer a priori guarantees, it has shown to yield empirically optimal approximate solutions. To establish an a posteriori approximation guarantee for such greedy heuristics, we can utilize the lower bounds generated by CFP. 

Figures~\ref{fig:21} and ~\ref{fig:22} showcase the combined results of LambdaLouvain with CFP lower bounds on graphs even larger than those considered in Table~\ref{tab:3graphs}. These results demonstrate superior a posteriori approximation ratios (clustering objective divided by CFP lower bound) compared to running CFP by itself. Notably, as $\lambda \rightarrow 1$, the difference in approximation factors between CFP and LambdaLouvain decreases, converging toward a similar outcome. We execute both the CFP rounding procedures and LambdaLouvain method for 15 iterations, reporting the mean and standard deviation for approximation ratios and runtimes. 
% The performance analysis in Figure~\ref{fig:2} illustrates that 
While the CFP+LambdaLouvain yields better approximations, it comes with longer runtimes compared to the CFP alone. Even for small values of $\lambda$, we can certify that LambdaLouvain can produce a respectable factor of around 2 by using the lower bounds generated by CFP. 
% In the complete version of the paper, we provide additional examples using both methods to further illustrate their effectiveness.

\begin{figure}
    \centering
    \subfloat{\includegraphics[width=0.45\linewidth]{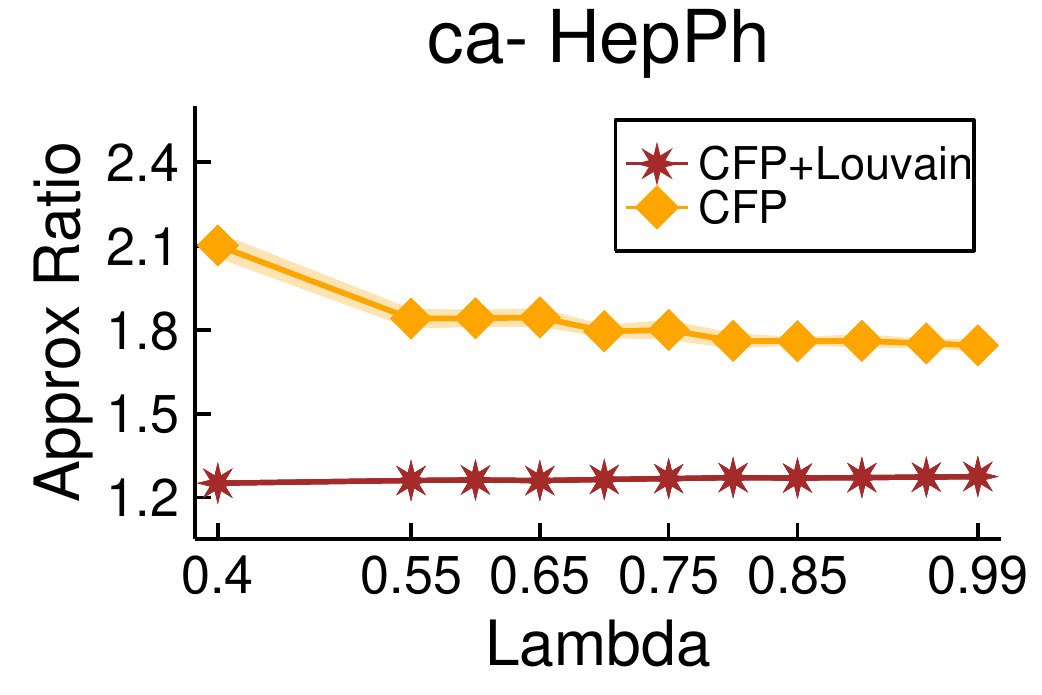}\label{fig:sub11}}\hfill
    \subfloat{\includegraphics[width=0.45\linewidth]{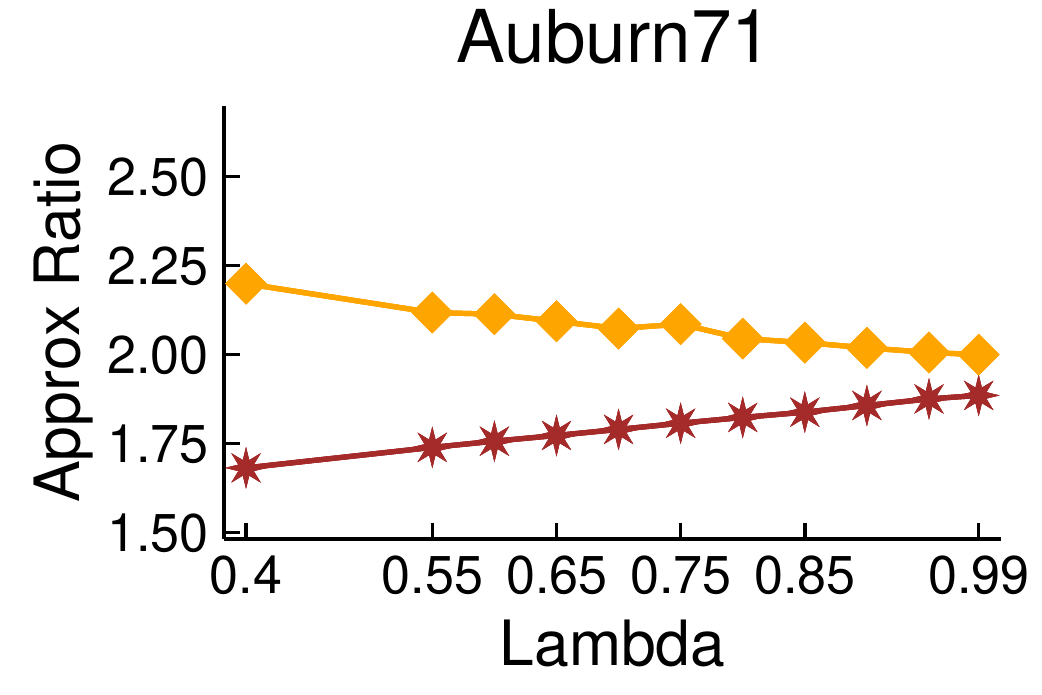}\label{fig:sub12}}
    
    \subfloat %[SNAP (ca-HepPh)]
    {\includegraphics[width=0.45\linewidth]{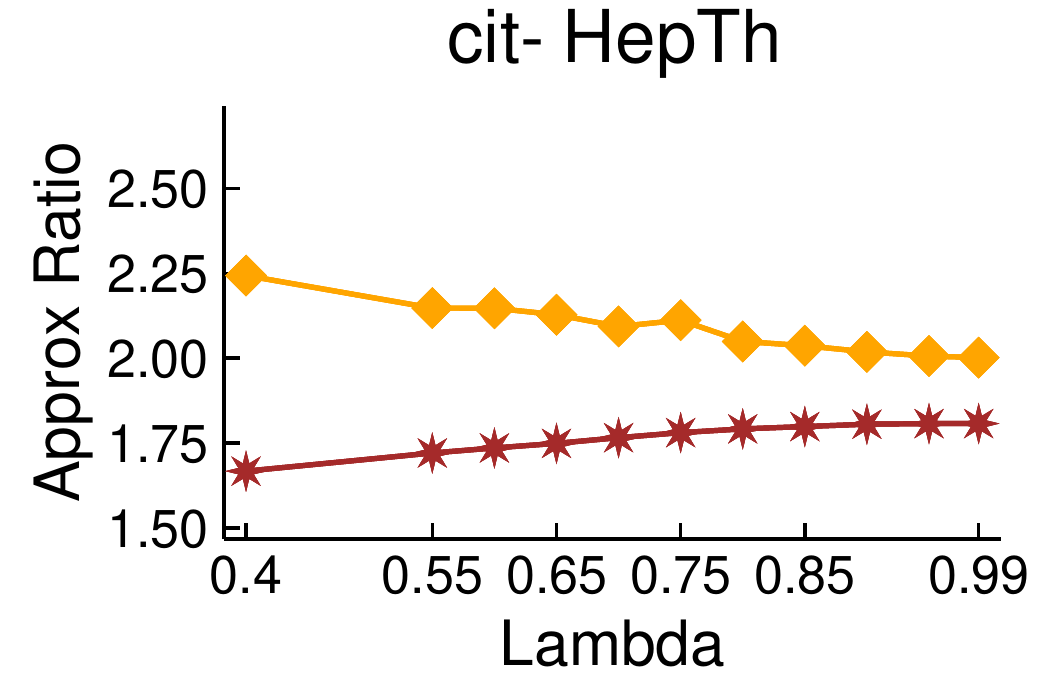}\label{fig:sub13}}\hfill
    \subfloat %[Auburn71]
    {\includegraphics[width=0.45\linewidth]{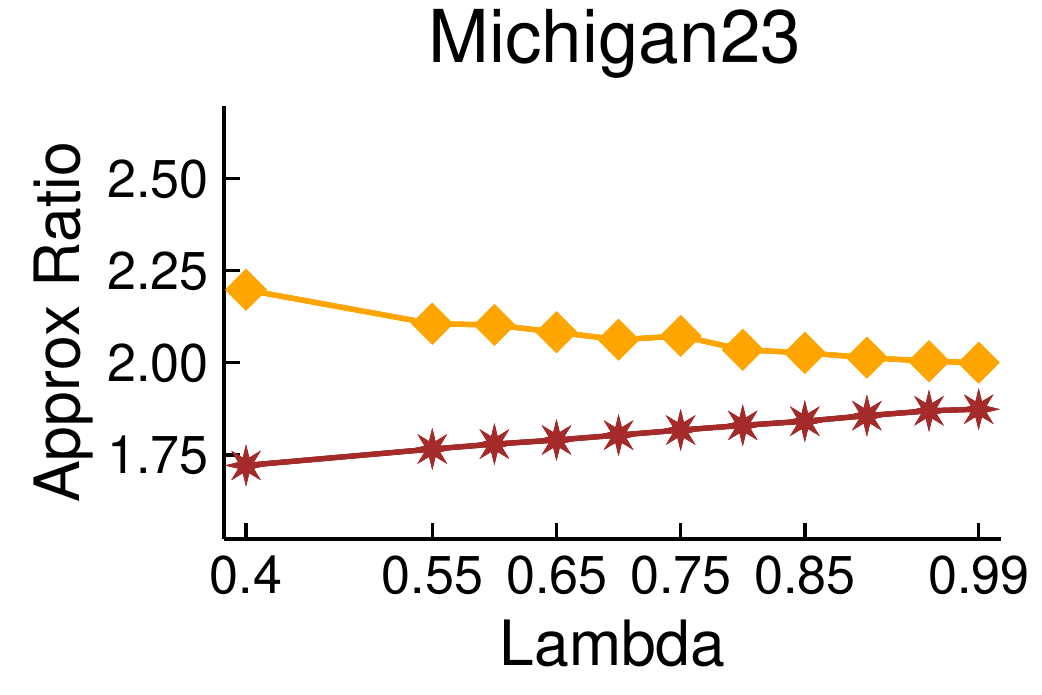}\label{fig:sub14}}
    \captionsetup[subfigure]{labelformat=empty} % Remove labels for the first two subfigures  
    
    \caption{Approximation ratios for combining CFP and LambdaLouvain on ca-HepPh (12K nodes, 118.5K edges), cit-HepTh(27K nodes, 352K edges) SNAP networks, and Auburn71 (18.4K nodes, 973.9K edges), Michigan23(20K nodes, 1.17M edges) facebook graphs for different values of $\lambda$. }
    \label{fig:21}
\end{figure}

\begin{figure}
    \centering
    \subfloat{\includegraphics[width=0.45\linewidth]{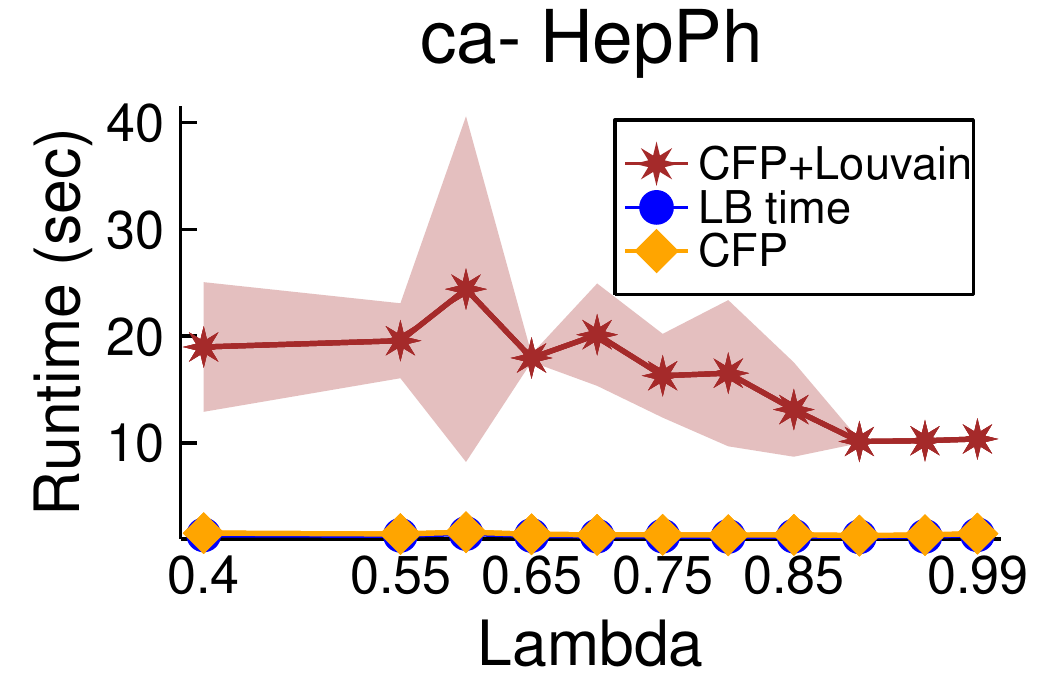}\label{fig:sub11}}\hfill
    \subfloat{\includegraphics[width=0.45\linewidth]{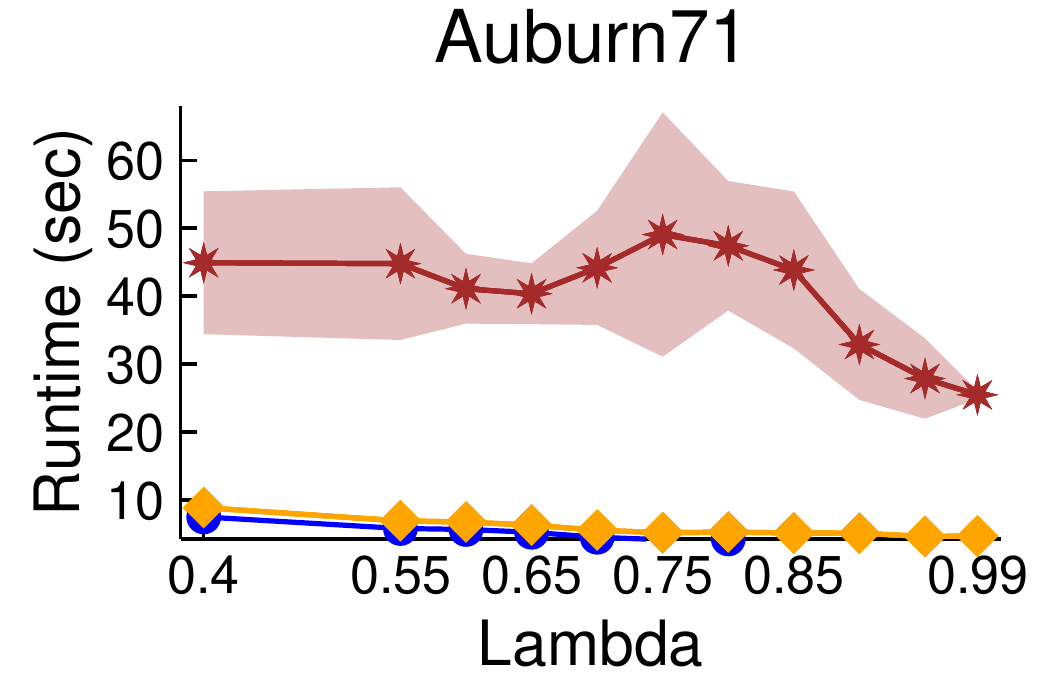}\label{fig:sub12}}  

    \subfloat %[SNAP (ca-HepPh)]
    {\includegraphics[width=0.45\linewidth]{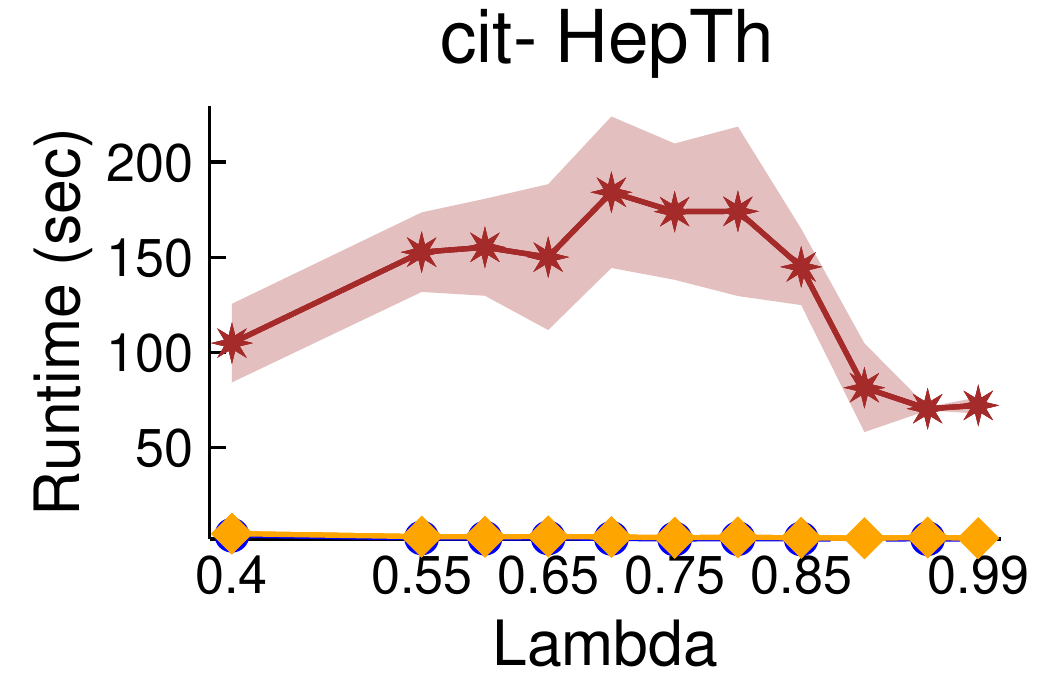}\label{fig:sub13}}\hfill
    \subfloat %[Auburn71]
    {\includegraphics[width=0.45\linewidth]{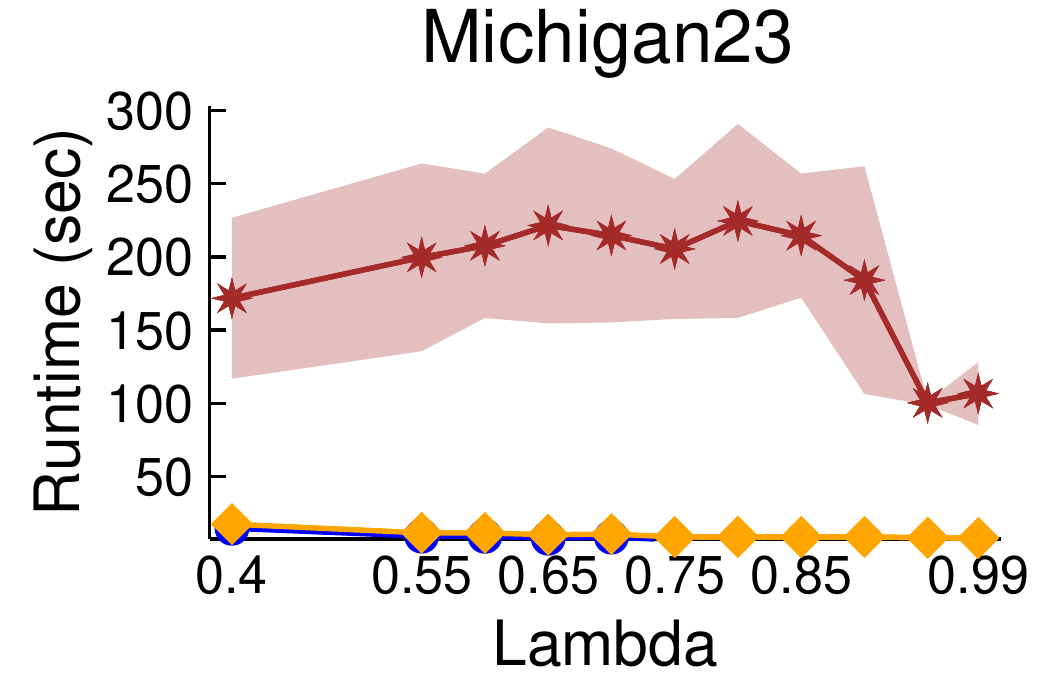}\label{fig:sub14}}
    \captionsetup[subfigure]{labelformat=empty} % Remove labels for the first two subfigures 
    
    \caption{Runtimes for combining CFP and LambdaLouvain on ca-HepPh, Auburn71, cit-HepTh and Michigan23 for different values of $\lambda$. The time to compute the CFP lower bound is in blue, which is the bottleneck for this algorithm.}
    \label{fig:22}
\end{figure}

\subsection{Scalability of CoverFlipPivot}
\begin{figure}
    \begin{minipage}[t]{.495\linewidth}
        \centering
        \includegraphics[width=\linewidth]{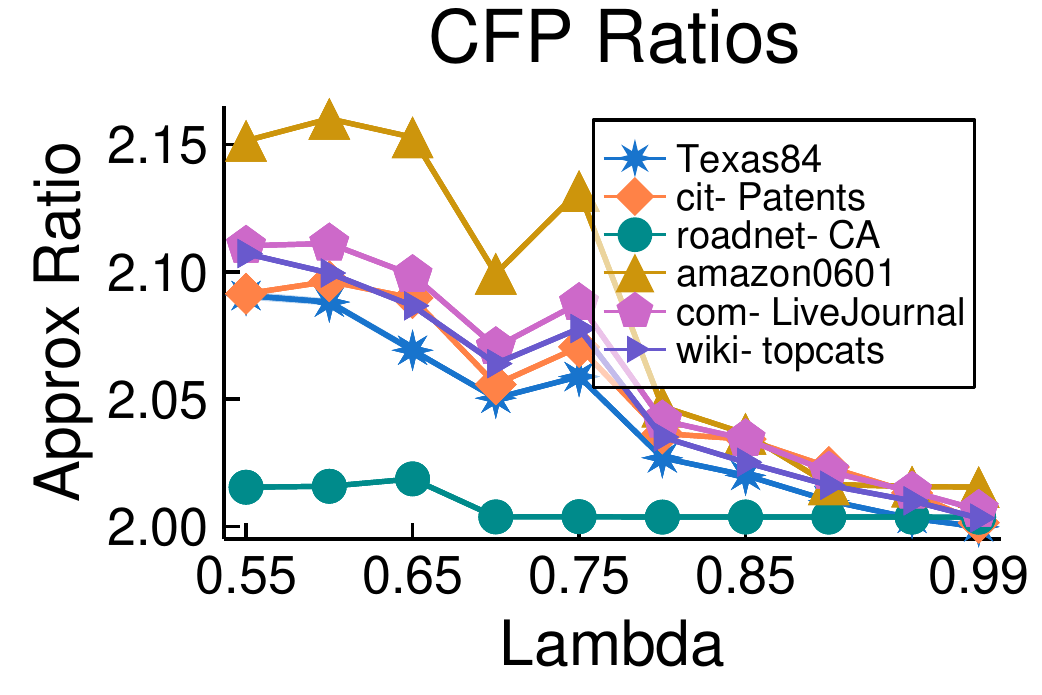}
        % \caption*{(a)}
        \label{fig:sub31}
    \end{minipage}\hfill
    \begin{minipage}[t]{.495\linewidth}
        \centering
        \includegraphics[width=\linewidth]{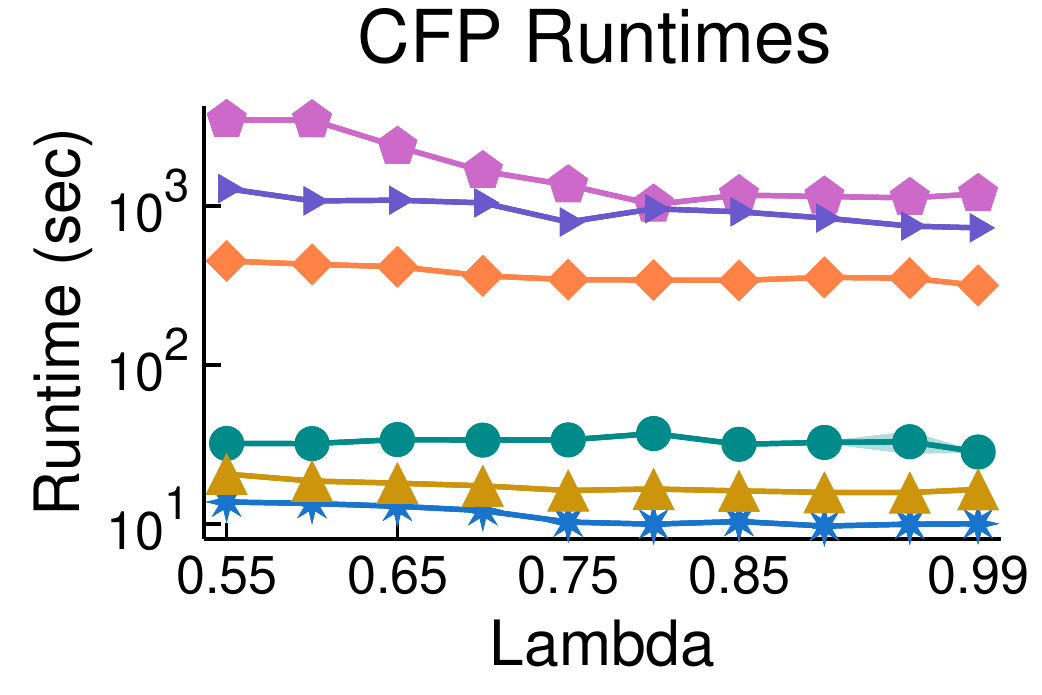}
        % \caption*{(b)}
        \label{fig:sub32}
    \end{minipage}
    \vspace{-\baselineskip}
    \caption{CFP approximations and runtimes on Texas84 (36K nodes and 1.59M edges), cit-Patents (3.6M nodes and 1.65B edges), roadNet-CA (1.97M nodes and 2.76M edges), amazon0601 (403K nodes and 2.44M edges), com-LiveJournal (3.9M nodes, 3.46B edges) and wiki-topcats (1.79M nodes and 2.54B edges).}
    \label{fig:3}
\end{figure}

We further test the limits of CFP by running it on much larger graphs. Figure~\ref{fig:3} shows approximation results on a social network with 1.59 million edges (Texas84), a road network with 2.76 million edges (roadNet-CA), a citation network with 1.65 \emph{billion} edges (cit-Patents), an Amazon product co-purchasing network with 2.4 million edges (amazon0601), Wikipedia web network with 2.54 billion edges (wiki-topcats) and a blogging community network with 3.46 billion edges (com-Journal).
% We proceed with evaluating the scalability of our combinatorial CFP algorithm by conducting tests on even larger graphs. 
% Figure~\ref{fig:3} demonstrates results for CFP on a range of different types of graphs (e.g., a road networks, a social network, citation network, and a large web network) with millions of nodes and edges.
% The CFP algorithm is executed on diverse graph categories obtained from the Facebook100 and SNAP datasets, encompassing graphs with sizes spanning multiple orders of magnitude. 
% Notably, even with a theoretical approximation guarantee of 6, 
CFP consistently outperforms its theoretical 6-approximation guarantee. For $\lambda \geq 0.55$, it produces approximations of 2.1 or better. When $\lambda = 0.5$, approximation factors increase to between 2.4-2.8, which still outperforms the 6-approximation guarantee. (We omit these results from the plot in Figure~\ref{fig:3} in order to zoom in and better display factors near 2 for $\lambda \geq 1/2$.)  For each value of $\lambda$, the method takes around 58 minutes for the larger com-LiveJournal graph (with 3.46B edges). In contrast, the cit-Patents graph, consisting of 1.65 billion edges, is processed in less than 11 minutes, showcasing a notably faster runtime. Furthermore, the method exhibits even quicker processing times for the other graphs.
An intriguing observation in this context is that as the objective transitions from cluster editing to cluster deletion ($\lambda\rightarrow 1$), both the approximation factor and the runtime exhibit improvements.

\section{Conclusion}
We present new approximation algorithms for LambdaCC graph clustering framework that are far more scalable than existing approximation algorithms relying on LP relaxations with $O(n^3)$ constraints. 
% Prior studies on approximating LambdaCC relied heavily on solving computationally expensive LP relaxations, while combinatorial methods were introduced for only special cases of cluster editing and cluster deletion. We bridge this gap by providing faster approximations for LambdaCC in all parameter regimes. 
We introduce the first combinatorial algorithm for LambdaCC in the parameter regime $\lambda \in (\frac12,1)$---where the problem interpolates between cluster editing and cluster deletion---which comes with a 6-approximation guarantee.  We then provide algorithms for all parameter regimes based on rounding a less expensive LP relaxation. A major theoretical benefit of these alternative LPs is that they are covering LPs. This means that the multiplicative weights update method provides fast combinatorial methods for finding approximation solutions. Although in our work we focused on using existing optimization software to exactly solve these relaxations, a clear direction for future research is to implement these faster approximate solvers in order to achieve additional runtime improvements. Another direction for future work is to see whether it is possible to develop a 3-approximation for all $\lambda \in (\frac12,1)$ by rounding the LambdaSTC LP. Though our theoretical approximation factors get increasingly worse for $\lambda \rightarrow 1$, in practice we see no deterioration in approximations. 
% to adapt our labeling and covering LP lower bounds to design faster approximation algorithms for other specially weighted variants of correlation clustering~\cite{puleo2015correlation,jafarov2020ccasymmetric}. 
Finally, a compelling open question is whether we can develop an $O(\log n)$ approximation algorithm for LambdaCC that applies for all values of $\lambda$ and can be made purely combinatorial, and does not rely on the canonical LP.

% , and can be entirely made combinatorial using the multiplicative weight update method. This work motivates further research on improving the lower bounds using alternative forms of randomized and deterministic rounding techniques. One of the main open questions is that of developing faster LP based and combinatorial methods for the heavily negative edge weighted version of correlation clustering. In this work, we perform experiments to show that our combinatorial CFP method can  also be used to provide a-posteriori approximation guarantees to fast heuristic algorithms.

% It would be interesting to analyze how these solutions can be further used to initialize the heuristic algorithms to produce a more refined and robust solution.

%%
%% The acknowledgments section is defined using the "acks" environment
%% (and NOT an unnumbered section). This ensures the proper
%% identification of the section in the article metadata, and the
%% consistent spelling of the heading.
%\begin{acks}
%Acknowledgments
%\end{acks}

%%
%% The next two lines define the bibliography style to be used, and
%% the bibliography file.
\bibliographystyle{ACM-Reference-Format}
\bibliography{refs,cc-cd}

%%% -*-BibTeX-*-
%%% Do NOT edit. File created by BibTeX with style
%%% ACM-Reference-Format-Journals [18-Jan-2012].

\begin{thebibliography}{53}

%%% ====================================================================
%%% NOTE TO THE USER: you can override these defaults by providing
%%% customized versions of any of these macros before the \bibliography
%%% command.  Each of them MUST provide its own final punctuation,
%%% except for \shownote{}, \showDOI{}, and \showURL{}.  The latter two
%%% do not use final punctuation, in order to avoid confusing it with
%%% the Web address.
%%%
%%% To suppress output of a particular field, define its macro to expand
%%% to an empty string, or better, \unskip, like this:
%%%
%%% \newcommand{\showDOI}[1]{\unskip}   % LaTeX syntax
%%%
%%% \def \showDOI #1{\unskip}           % plain TeX syntax
%%%
%%% ====================================================================

\ifx \showCODEN    \undefined \def \showCODEN     #1{\unskip}     \fi
\ifx \showDOI      \undefined \def \showDOI       #1{#1}\fi
\ifx \showISBNx    \undefined \def \showISBNx     #1{\unskip}     \fi
\ifx \showISBNxiii \undefined \def \showISBNxiii  #1{\unskip}     \fi
\ifx \showISSN     \undefined \def \showISSN      #1{\unskip}     \fi
\ifx \showLCCN     \undefined \def \showLCCN      #1{\unskip}     \fi
\ifx \shownote     \undefined \def \shownote      #1{#1}          \fi
\ifx \showarticletitle \undefined \def \showarticletitle #1{#1}   \fi
\ifx \showURL      \undefined \def \showURL       {\relax}        \fi
% The following commands are used for tagged output and should be
% invisible to TeX
\providecommand\bibfield[2]{#2}
\providecommand\bibinfo[2]{#2}
\providecommand\natexlab[1]{#1}
\providecommand\showeprint[2][]{arXiv:#2}

\bibitem[Abbe(2018)]%
        {abbe2018community}
\bibfield{author}{\bibinfo{person}{Emmanuel Abbe}.}
  \bibinfo{year}{2018}\natexlab{}.
\newblock \showarticletitle{Community Detection and Stochastic Block Models:
  Recent Developments}.
\newblock \bibinfo{journal}{\emph{Journal of Machine Learning Research}}
  \bibinfo{volume}{18}, \bibinfo{number}{177} (\bibinfo{year}{2018}),
  \bibinfo{pages}{1--86}.
\newblock


\bibitem[Adriaens et~al\mbox{.}(2020)]%
        {adriaens2020relaxing}
\bibfield{author}{\bibinfo{person}{Florian Adriaens}, \bibinfo{person}{Tijl
  De~Bie}, \bibinfo{person}{Aristides Gionis}, \bibinfo{person}{Jefrey
  Lijffijt}, \bibinfo{person}{Antonis Matakos}, {and} \bibinfo{person}{Polina
  Rozenshtein}.} \bibinfo{year}{2020}\natexlab{}.
\newblock \showarticletitle{Relaxing the strong triadic closure problem for
  edge strength inference}.
\newblock \bibinfo{journal}{\emph{Data Mining and Knowledge Discovery}}
  \bibinfo{volume}{34} (\bibinfo{year}{2020}), \bibinfo{pages}{611--651}.
\newblock


\bibitem[Ailon et~al\mbox{.}(2012)]%
        {Ailon2011bcc}
\bibfield{author}{\bibinfo{person}{Nir. Ailon}, \bibinfo{person}{Noa.
  Avigdor-Elgrabli}, \bibinfo{person}{Edo. Liberty}, {and}
  \bibinfo{person}{Anke. van Zuylen}.} \bibinfo{year}{2012}\natexlab{}.
\newblock \showarticletitle{Improved Approximation Algorithms for Bipartite
  Correlation Clustering}.
\newblock \bibinfo{journal}{\emph{SIAM J. Comput.}} \bibinfo{volume}{41},
  \bibinfo{number}{5} (\bibinfo{year}{2012}), \bibinfo{pages}{1110--1121}.
\newblock


\bibitem[Ailon et~al\mbox{.}(2008)]%
        {AilonCharikarNewman2008}
\bibfield{author}{\bibinfo{person}{Nir Ailon}, \bibinfo{person}{Moses
  Charikar}, {and} \bibinfo{person}{Alantha Newman}.}
  \bibinfo{year}{2008}\natexlab{}.
\newblock \showarticletitle{Aggregating inconsistent information: ranking and
  clustering}.
\newblock \bibinfo{journal}{\emph{Journal of the ACM (JACM)}}
  \bibinfo{volume}{55}, \bibinfo{number}{5} (\bibinfo{year}{2008}),
  \bibinfo{pages}{23}.
\newblock


\bibitem[Arora et~al\mbox{.}(2009)]%
        {arora2009scut}
\bibfield{author}{\bibinfo{person}{Sanjeev Arora}, \bibinfo{person}{Satish
  Rao}, {and} \bibinfo{person}{Umesh Vazirani}.}
  \bibinfo{year}{2009}\natexlab{}.
\newblock \showarticletitle{Expander flows, geometric embeddings and graph
  partitioning}.
\newblock \bibinfo{journal}{\emph{Journal of the ACM (JACM)}}
  \bibinfo{volume}{56}, \bibinfo{number}{2} (\bibinfo{year}{2009}),
  \bibinfo{pages}{1--37}.
\newblock


\bibitem[Bansal et~al\mbox{.}(2004)]%
        {BansalBlumChawla2004}
\bibfield{author}{\bibinfo{person}{Nikhil Bansal}, \bibinfo{person}{Avrim
  Blum}, {and} \bibinfo{person}{Shuchi Chawla}.}
  \bibinfo{year}{2004}\natexlab{}.
\newblock \showarticletitle{Correlation Clustering}.
\newblock \bibinfo{journal}{\emph{Machine Learning}}  \bibinfo{volume}{56}
  (\bibinfo{year}{2004}), \bibinfo{pages}{89--113}.
\newblock


\bibitem[Bar-Yehuda and Even(1985)]%
        {bar1985local}
\bibfield{author}{\bibinfo{person}{Reuven Bar-Yehuda} {and}
  \bibinfo{person}{Shimon Even}.} \bibinfo{year}{1985}\natexlab{}.
\newblock \showarticletitle{A local-ratio theorem for approximating the
  weighted vertex cover problem}.
\newblock \bibinfo{journal}{\emph{Annals of Discrete Mathematics}}
  \bibinfo{volume}{25}, \bibinfo{number}{27-46} (\bibinfo{year}{1985}),
  \bibinfo{pages}{50}.
\newblock


\bibitem[Ben-Dor et~al\mbox{.}(1999)]%
        {Ben-DorShamirYakhini1999}
\bibfield{author}{\bibinfo{person}{Amir Ben-Dor}, \bibinfo{person}{Ron Shamir},
  {and} \bibinfo{person}{Zohar Yakhini}.} \bibinfo{year}{1999}\natexlab{}.
\newblock \showarticletitle{Clustering gene expression patterns}.
\newblock \bibinfo{journal}{\emph{Journal of computational biology}}
  \bibinfo{volume}{6}, \bibinfo{number}{3-4} (\bibinfo{year}{1999}),
  \bibinfo{pages}{281--297}.
\newblock


\bibitem[Blondel et~al\mbox{.}(2008)]%
        {blondel2008fast}
\bibfield{author}{\bibinfo{person}{Vincent~D Blondel},
  \bibinfo{person}{Jean-Loup Guillaume}, \bibinfo{person}{Renaud Lambiotte},
  {and} \bibinfo{person}{Etienne Lefebvre}.} \bibinfo{year}{2008}\natexlab{}.
\newblock \showarticletitle{Fast unfolding of communities in large networks}.
\newblock \bibinfo{journal}{\emph{Journal of statistical mechanics: theory and
  experiment}} \bibinfo{volume}{2008}, \bibinfo{number}{10}
  (\bibinfo{year}{2008}), \bibinfo{pages}{P10008}.
\newblock


\bibitem[Bohlin et~al\mbox{.}(2014)]%
        {bohlin2014community}
\bibfield{author}{\bibinfo{person}{Ludvig Bohlin}, \bibinfo{person}{Daniel
  Edler}, \bibinfo{person}{Andrea Lancichinetti}, {and} \bibinfo{person}{Martin
  Rosvall}.} \bibinfo{year}{2014}\natexlab{}.
\newblock \showarticletitle{Community detection and visualization of networks
  with the map equation framework}.
\newblock In \bibinfo{booktitle}{\emph{Measuring Scholarly Impact}}.
  \bibinfo{publisher}{Springer}, \bibinfo{pages}{3--34}.
\newblock


\bibitem[Charikar et~al\mbox{.}(2003)]%
        {CharikarGuruswamiWirth2003}
\bibfield{author}{\bibinfo{person}{Moses Charikar}, \bibinfo{person}{Venkatesan
  Guruswami}, {and} \bibinfo{person}{Anthony Wirth}.}
  \bibinfo{year}{2003}\natexlab{}.
\newblock \showarticletitle{Clustering with qualitative information}. In
  \bibinfo{booktitle}{\emph{Foundations of Computer Science, 2003. Proceedings.
  44th Annual IEEE Symposium on}}. IEEE, \bibinfo{pages}{524--533}.
\newblock


\bibitem[Charikar et~al\mbox{.}(2005)]%
        {CharikarGuruswamiWirth2005}
\bibfield{author}{\bibinfo{person}{Moses Charikar}, \bibinfo{person}{Venkatesan
  Guruswami}, {and} \bibinfo{person}{Anthony Wirth}.}
  \bibinfo{year}{2005}\natexlab{}.
\newblock \showarticletitle{Clustering with qualitative information}.
\newblock \bibinfo{journal}{\emph{J. Comput. System Sci.}}
  \bibinfo{volume}{71}, \bibinfo{number}{3} (\bibinfo{year}{2005}),
  \bibinfo{pages}{360 -- 383}.
\newblock
\showISSN{0022-0000}
\urldef\tempurl%
\url{https://doi.org/10.1016/j.jcss.2004.10.012}
\showDOI{\tempurl}
\newblock
\shownote{Learning Theory 2003}.


\bibitem[Chawla et~al\mbox{.}(2015)]%
        {ChawlaMakarychevSchrammEtAl2015}
\bibfield{author}{\bibinfo{person}{Shuchi Chawla}, \bibinfo{person}{Konstantin
  Makarychev}, \bibinfo{person}{Tselil Schramm}, {and} \bibinfo{person}{Grigory
  Yaroslavtsev}.} \bibinfo{year}{2015}\natexlab{}.
\newblock \showarticletitle{Near optimal LP rounding algorithm for correlation
  clustering on complete and complete k-partite graphs}. In
  \bibinfo{booktitle}{\emph{Proceedings of the Forty-Seventh Annual ACM on
  Symposium on Theory of Computing}}. ACM, \bibinfo{pages}{219--228}.
\newblock


\bibitem[Cohen et~al\mbox{.}(2021)]%
        {cohen2021solving}
\bibfield{author}{\bibinfo{person}{Michael~B Cohen}, \bibinfo{person}{Yin~Tat
  Lee}, {and} \bibinfo{person}{Zhao Song}.} \bibinfo{year}{2021}\natexlab{}.
\newblock \showarticletitle{Solving linear programs in the current matrix
  multiplication time}.
\newblock \bibinfo{journal}{\emph{Journal of the ACM (JACM)}}
  \bibinfo{volume}{68}, \bibinfo{number}{1} (\bibinfo{year}{2021}),
  \bibinfo{pages}{1--39}.
\newblock


\bibitem[Davis and Hu(2011)]%
        {suitesparse2011davis}
\bibfield{author}{\bibinfo{person}{Timothy~A. Davis} {and}
  \bibinfo{person}{Yifan Hu}.} \bibinfo{year}{2011}\natexlab{}.
\newblock \showarticletitle{The University of Florida Sparse Matrix
  Collection}.
\newblock \bibinfo{journal}{\emph{ACM Trans. Math. Softw.}}
  \bibinfo{volume}{38}, \bibinfo{number}{1}, Article \bibinfo{articleno}{1}
  (\bibinfo{date}{dec} \bibinfo{year}{2011}), \bibinfo{numpages}{25}~pages.
\newblock
\showISSN{0098-3500}
\urldef\tempurl%
\url{https://doi.org/10.1145/2049662.2049663}
\showDOI{\tempurl}


\bibitem[Delvenne et~al\mbox{.}(2010)]%
        {DelvenneYalirakiBarahona2010}
\bibfield{author}{\bibinfo{person}{J.-C. Delvenne}, \bibinfo{person}{Sophia~N
  Yaliraki}, {and} \bibinfo{person}{Mauricio Barahona}.}
  \bibinfo{year}{2010}\natexlab{}.
\newblock \showarticletitle{Stability of graph communities across time scales}.
\newblock \bibinfo{journal}{\emph{Proceedings of the National Academy of
  Sciences}} \bibinfo{volume}{107}, \bibinfo{number}{29}
  (\bibinfo{year}{2010}), \bibinfo{pages}{12755--12760}.
\newblock


\bibitem[Demaine et~al\mbox{.}(2006)]%
        {DemaineEmanuelFiatEtAl2006}
\bibfield{author}{\bibinfo{person}{Erik~D. Demaine}, \bibinfo{person}{Dotan
  Emanuel}, \bibinfo{person}{Amos Fiat}, {and} \bibinfo{person}{Nicole
  Immorlica}.} \bibinfo{year}{2006}\natexlab{}.
\newblock \showarticletitle{Correlation clustering in general weighted graphs}.
\newblock \bibinfo{journal}{\emph{Theoretical Computer Science}}
  \bibinfo{volume}{361}, \bibinfo{number}{2} (\bibinfo{year}{2006}),
  \bibinfo{pages}{172 -- 187}.
\newblock
\showISSN{0304-3975}
\urldef\tempurl%
\url{https://doi.org/10.1016/j.tcs.2006.05.008}
\showDOI{\tempurl}
\newblock
\shownote{Approximation and Online Algorithms}.


\bibitem[Easley and Kleinberg(2010)]%
        {easley2010networks}
\bibfield{author}{\bibinfo{person}{David Easley} {and} \bibinfo{person}{Jon
  Kleinberg}.} \bibinfo{year}{2010}\natexlab{}.
\newblock \bibinfo{booktitle}{\emph{Networks, crowds, and markets}}.
  Vol.~\bibinfo{volume}{8}.
\newblock \bibinfo{publisher}{Cambridge university press Cambridge}.
\newblock


\bibitem[Fleischer(2004)]%
        {fleischer2004fast}
\bibfield{author}{\bibinfo{person}{Lisa Fleischer}.}
  \bibinfo{year}{2004}\natexlab{}.
\newblock \showarticletitle{A fast approximation scheme for fractional covering
  problems with variable upper bounds}. In
  \bibinfo{booktitle}{\emph{Proceedings of the fifteenth annual ACM-SIAM
  symposium on Discrete algorithms}}. \bibinfo{pages}{1001--1010}.
\newblock


\bibitem[Fortunato and Hric(2016)]%
        {fortunato2016communitydetection}
\bibfield{author}{\bibinfo{person}{Santo Fortunato} {and}
  \bibinfo{person}{Darko Hric}.} \bibinfo{year}{2016}\natexlab{}.
\newblock \showarticletitle{Community detection in networks: A user guide}.
\newblock \bibinfo{journal}{\emph{Physics Reports}}  \bibinfo{volume}{659}
  (\bibinfo{year}{2016}), \bibinfo{pages}{1 -- 44}.
\newblock
\showISSN{0370-1573}
\urldef\tempurl%
\url{https://doi.org/10.1016/j.physrep.2016.09.002}
\showDOI{\tempurl}
\newblock
\shownote{Community detection in networks: A user guide}.


\bibitem[Gan et~al\mbox{.}(2020)]%
        {gan2020graph}
\bibfield{author}{\bibinfo{person}{Junhao Gan}, \bibinfo{person}{David~F.
  Gleich}, \bibinfo{person}{Nate Veldt}, \bibinfo{person}{Anthony Wirth}, {and}
  \bibinfo{person}{Xin Zhang}.} \bibinfo{year}{2020}\natexlab{}.
\newblock \showarticletitle{{Graph Clustering in All Parameter Regimes}}. In
  \bibinfo{booktitle}{\emph{45th International Symposium on Mathematical
  Foundations of Computer Science}} \emph{(\bibinfo{series}{MFCS '20},
  Vol.~\bibinfo{volume}{170})}. \bibinfo{pages}{39:1--39:15}.
\newblock
\urldef\tempurl%
\url{https://doi.org/10.4230/LIPIcs.MFCS.2020.39}
\showDOI{\tempurl}


\bibitem[Garg and Khandekar(2004)]%
        {garg2004fractional}
\bibfield{author}{\bibinfo{person}{Naveen Garg} {and} \bibinfo{person}{Rohit
  Khandekar}.} \bibinfo{year}{2004}\natexlab{}.
\newblock \showarticletitle{Fractional covering with upper bounds on the
  variables: Solving LPs with negative entries}. In
  \bibinfo{booktitle}{\emph{Algorithms--ESA 2004: 12th Annual European
  Symposium, Bergen, Norway, September 14-17, 2004. Proceedings 12}}. Springer,
  \bibinfo{pages}{371--382}.
\newblock


\bibitem[Gleich et~al\mbox{.}(2018)]%
        {Veldt2018ccgen}
\bibfield{author}{\bibinfo{person}{David~F. Gleich}, \bibinfo{person}{Nate
  Veldt}, {and} \bibinfo{person}{Anthony Wirth}.}
  \bibinfo{year}{2018}\natexlab{}.
\newblock \showarticletitle{{Correlation Clustering Generalized}}. In
  \bibinfo{booktitle}{\emph{29th International Symposium on Algorithms and
  Computation}} \emph{(\bibinfo{series}{ISAAC 2018},
  Vol.~\bibinfo{volume}{123})}. \bibinfo{publisher}{Schloss
  Dagstuhl--Leibniz-Zentrum fuer Informatik}, \bibinfo{address}{Dagstuhl,
  Germany}, \bibinfo{pages}{44:1--44:13}.
\newblock
\showISBNx{978-3-95977-094-1}
\showISSN{1868-8969}
\urldef\tempurl%
\url{https://doi.org/10.4230/LIPIcs.ISAAC.2018.44}
\showDOI{\tempurl}


\bibitem[Granovetter(1973)]%
        {granovetter1973strength}
\bibfield{author}{\bibinfo{person}{Mark~S Granovetter}.}
  \bibinfo{year}{1973}\natexlab{}.
\newblock \showarticletitle{The strength of weak ties}.
\newblock \bibinfo{journal}{\emph{American journal of sociology}}
  \bibinfo{volume}{78}, \bibinfo{number}{6} (\bibinfo{year}{1973}),
  \bibinfo{pages}{1360--1380}.
\newblock


\bibitem[Gr{\"u}ttemeier and Komusiewicz(2020)]%
        {gruttemeier2020relation}
\bibfield{author}{\bibinfo{person}{Niels Gr{\"u}ttemeier} {and}
  \bibinfo{person}{Christian Komusiewicz}.} \bibinfo{year}{2020}\natexlab{}.
\newblock \showarticletitle{On the relation of strong triadic closure and
  cluster deletion}.
\newblock \bibinfo{journal}{\emph{Algorithmica}} \bibinfo{volume}{82},
  \bibinfo{number}{4} (\bibinfo{year}{2020}), \bibinfo{pages}{853--880}.
\newblock
\urldef\tempurl%
\url{https://doi.org/10.1007/s00453-019-00617-1}
\showDOI{\tempurl}


\bibitem[Gr{\"u}ttemeier and Morawietz(2020)]%
        {gruttemeierstrong}
\bibfield{author}{\bibinfo{person}{Niels Gr{\"u}ttemeier} {and}
  \bibinfo{person}{Nils Morawietz}.} \bibinfo{year}{2020}\natexlab{}.
\newblock \showarticletitle{On Strong Triadic Closure with Edge Insertion}.
\newblock \bibinfo{journal}{\emph{Technical report}} (\bibinfo{year}{2020}).
\newblock


\bibitem[Gurobi~Optimization(2021)]%
        {gurobi2021gurobi}
\bibfield{author}{\bibinfo{person}{LLC Gurobi~Optimization}.}
  \bibinfo{year}{2021}\natexlab{}.
\newblock \bibinfo{title}{Gurobi optimizer reference manual}.
\newblock
\newblock


\bibitem[Jiang et~al\mbox{.}(2021)]%
        {jiang2021faster}
\bibfield{author}{\bibinfo{person}{Shunhua Jiang}, \bibinfo{person}{Zhao Song},
  \bibinfo{person}{Omri Weinstein}, {and} \bibinfo{person}{Hengjie Zhang}.}
  \bibinfo{year}{2021}\natexlab{}.
\newblock \showarticletitle{A Faster Algorithm for Solving General LPs}. In
  \bibinfo{booktitle}{\emph{Proceedings of the 53rd Annual ACM SIGACT Symposium
  on Theory of Computing}} \emph{(\bibinfo{series}{STOC '21})}.
  \bibinfo{publisher}{Association for Computing Machinery},
  \bibinfo{address}{New York, NY, USA}, \bibinfo{pages}{823–832}.
\newblock
\showISBNx{9781450380539}
\urldef\tempurl%
\url{https://doi.org/10.1145/3406325.3451058}
\showDOI{\tempurl}


\bibitem[Konstantinidis et~al\mbox{.}(2018)]%
        {konstantinidis2018strong}
\bibfield{author}{\bibinfo{person}{Athanasios~L Konstantinidis},
  \bibinfo{person}{Stavros~D Nikolopoulos}, {and} \bibinfo{person}{Charis
  Papadopoulos}.} \bibinfo{year}{2018}\natexlab{}.
\newblock \showarticletitle{Strong triadic closure in cographs and graphs of
  low maximum degree}.
\newblock \bibinfo{journal}{\emph{Theoretical Computer Science}}
  \bibinfo{volume}{740} (\bibinfo{year}{2018}), \bibinfo{pages}{76--84}.
\newblock


\bibitem[Leighton and Rao(1999)]%
        {LeightonRao1999}
\bibfield{author}{\bibinfo{person}{Tom Leighton} {and} \bibinfo{person}{Satish
  Rao}.} \bibinfo{year}{1999}\natexlab{}.
\newblock \showarticletitle{Multicommodity max-flow min-cut theorems and their
  use in designing approximation algorithms}.
\newblock \bibinfo{journal}{\emph{Journal of the ACM (JACM)}}
  \bibinfo{volume}{46}, \bibinfo{number}{6} (\bibinfo{date}{November}
  \bibinfo{year}{1999}), \bibinfo{pages}{787--832}.
\newblock


\bibitem[Leskovec and Sosi{\v{c}}(2016)]%
        {snap}
\bibfield{author}{\bibinfo{person}{Jure Leskovec} {and} \bibinfo{person}{Rok
  Sosi{\v{c}}}.} \bibinfo{year}{2016}\natexlab{}.
\newblock \showarticletitle{Snap: A general-purpose network analysis and
  graph-mining library}.
\newblock \bibinfo{journal}{\emph{ACM Transactions on Intelligent Systems and
  Technology (TIST)}} \bibinfo{volume}{8}, \bibinfo{number}{1}
  (\bibinfo{year}{2016}), \bibinfo{pages}{1--20}.
\newblock


\bibitem[Newman(2006)]%
        {newman2006modularity}
\bibfield{author}{\bibinfo{person}{Mark~EJ Newman}.}
  \bibinfo{year}{2006}\natexlab{}.
\newblock \showarticletitle{Finding community structure in networks using the
  eigenvectors of matrices}.
\newblock \bibinfo{journal}{\emph{Physical review E}} \bibinfo{volume}{74},
  \bibinfo{number}{3} (\bibinfo{year}{2006}), \bibinfo{pages}{036104}.
\newblock


\bibitem[Newman(2016)]%
        {newman2016equivalence}
\bibfield{author}{\bibinfo{person}{Mark~EJ Newman}.}
  \bibinfo{year}{2016}\natexlab{}.
\newblock \showarticletitle{Equivalence between modularity optimization and
  maximum likelihood methods for community detection}.
\newblock \bibinfo{journal}{\emph{Physical Review E}} \bibinfo{volume}{94},
  \bibinfo{number}{5} (\bibinfo{year}{2016}), \bibinfo{pages}{052315}.
\newblock


\bibitem[Newman and Girvan(2004)]%
        {newman2004modularity}
\bibfield{author}{\bibinfo{person}{Mark~EJ Newman} {and}
  \bibinfo{person}{Michelle Girvan}.} \bibinfo{year}{2004}\natexlab{}.
\newblock \showarticletitle{Finding and evaluating community structure in
  networks}.
\newblock \bibinfo{journal}{\emph{Physical review E}} \bibinfo{volume}{69},
  \bibinfo{number}{2} (\bibinfo{year}{2004}), \bibinfo{pages}{026113}.
\newblock


\bibitem[Oettershagen et~al\mbox{.}(2023)]%
        {oettershagen2023inferring}
\bibfield{author}{\bibinfo{person}{Lutz Oettershagen},
  \bibinfo{person}{Athanasios~L Konstantinidis}, {and}
  \bibinfo{person}{Giuseppe~F Italiano}.} \bibinfo{year}{2023}\natexlab{}.
\newblock \showarticletitle{Inferring Tie Strength in Temporal Networks}. In
  \bibinfo{booktitle}{\emph{Machine Learning and Knowledge Discovery in
  Databases: European Conference, ECML PKDD 2022, Grenoble, France, September
  19--23, 2022, Proceedings, Part II}}. Springer, \bibinfo{pages}{69--85}.
\newblock


\bibitem[Porter et~al\mbox{.}(2009)]%
        {porter2009communities}
\bibfield{author}{\bibinfo{person}{Mason~A Porter},
  \bibinfo{person}{Jukka-Pekka Onnela}, {and} \bibinfo{person}{Peter~J Mucha}.}
  \bibinfo{year}{2009}\natexlab{}.
\newblock \showarticletitle{Communities in networks}.
\newblock \bibinfo{journal}{\emph{Notices of the AMS}} \bibinfo{volume}{56},
  \bibinfo{number}{9} (\bibinfo{year}{2009}), \bibinfo{pages}{1082--1097}.
\newblock


\bibitem[Quanrud(2020)]%
        {quanrud2020nearly}
\bibfield{author}{\bibinfo{person}{Kent Quanrud}.}
  \bibinfo{year}{2020}\natexlab{}.
\newblock \showarticletitle{Nearly linear time approximations for mixed packing
  and covering problems without data structures or randomization}. In
  \bibinfo{booktitle}{\emph{Symposium on Simplicity in Algorithms}}. SIAM,
  \bibinfo{pages}{69--80}.
\newblock


\bibitem[Reichardt and Bornholdt(2006)]%
        {ReichardtBornholdt2006}
\bibfield{author}{\bibinfo{person}{J{\"o}rg Reichardt} {and}
  \bibinfo{person}{Stefan Bornholdt}.} \bibinfo{year}{2006}\natexlab{}.
\newblock \showarticletitle{Statistical mechanics of community detection}.
\newblock \bibinfo{journal}{\emph{Physical Review E}} \bibinfo{volume}{74},
  \bibinfo{number}{016110} (\bibinfo{year}{2006}).
\newblock


\bibitem[Schaeffer(2007)]%
        {schaeffer2007graphclustering}
\bibfield{author}{\bibinfo{person}{Satu~Elisa Schaeffer}.}
  \bibinfo{year}{2007}\natexlab{}.
\newblock \showarticletitle{Graph clustering}.
\newblock \bibinfo{journal}{\emph{Computer Science Review}}
  \bibinfo{volume}{1}, \bibinfo{number}{1} (\bibinfo{year}{2007}),
  \bibinfo{pages}{27 -- 64}.
\newblock
\showISSN{1574-0137}
\urldef\tempurl%
\url{https://doi.org/10.1016/j.cosrev.2007.05.001}
\showDOI{\tempurl}


\bibitem[Schaub et~al\mbox{.}(2012)]%
        {schaub2012encoding}
\bibfield{author}{\bibinfo{person}{Michael~T Schaub}, \bibinfo{person}{Renaud
  Lambiotte}, {and} \bibinfo{person}{Mauricio Barahona}.}
  \bibinfo{year}{2012}\natexlab{}.
\newblock \showarticletitle{Encoding dynamics for multiscale community
  detection: Markov time sweeping for the map equation}.
\newblock \bibinfo{journal}{\emph{Physical Review E}} \bibinfo{volume}{86},
  \bibinfo{number}{2} (\bibinfo{year}{2012}), \bibinfo{pages}{026112}.
\newblock


\bibitem[Shamir et~al\mbox{.}(2004)]%
        {shamir2004cluster}
\bibfield{author}{\bibinfo{person}{Ron Shamir}, \bibinfo{person}{Roded Sharan},
  {and} \bibinfo{person}{Dekel Tsur}.} \bibinfo{year}{2004}\natexlab{}.
\newblock \showarticletitle{Cluster graph modification problems}.
\newblock \bibinfo{journal}{\emph{Discrete Applied Mathematics}}
  \bibinfo{volume}{144}, \bibinfo{number}{1-2} (\bibinfo{year}{2004}),
  \bibinfo{pages}{173--182}.
\newblock


\bibitem[Shi et~al\mbox{.}(2021)]%
        {shi2021scalable}
\bibfield{author}{\bibinfo{person}{Jessica Shi}, \bibinfo{person}{Laxman
  Dhulipala}, \bibinfo{person}{David Eisenstat}, \bibinfo{person}{Jakub
  \L{}\u{a}cki}, {and} \bibinfo{person}{Vahab Mirrokni}.}
  \bibinfo{year}{2021}\natexlab{}.
\newblock \showarticletitle{Scalable Community Detection via Parallel
  Correlation Clustering}.
\newblock \bibinfo{journal}{\emph{Proc. VLDB Endow.}} \bibinfo{volume}{14},
  \bibinfo{number}{11} (\bibinfo{date}{jul} \bibinfo{year}{2021}),
  \bibinfo{pages}{2305–2313}.
\newblock
\showISSN{2150-8097}
\urldef\tempurl%
\url{https://doi.org/10.14778/3476249.3476282}
\showDOI{\tempurl}


\bibitem[Shi and Malik(2000)]%
        {ShiMalik2000}
\bibfield{author}{\bibinfo{person}{Jianbo Shi} {and} \bibinfo{person}{J.
  Malik}.} \bibinfo{year}{2000}\natexlab{}.
\newblock \showarticletitle{Normalized cuts and image segmentation}.
\newblock \bibinfo{journal}{\emph{Pattern Analysis and Machine Intelligence,
  IEEE Transactions on}} \bibinfo{volume}{22}, \bibinfo{number}{8}
  (\bibinfo{date}{August} \bibinfo{year}{2000}), \bibinfo{pages}{888--905}.
\newblock
\urldef\tempurl%
\url{https://doi.org/10.1109/34.868688}
\showDOI{\tempurl}


\bibitem[Sintos and Tsaparas(2014)]%
        {sintos2014using}
\bibfield{author}{\bibinfo{person}{Stavros Sintos} {and}
  \bibinfo{person}{Panayiotis Tsaparas}.} \bibinfo{year}{2014}\natexlab{}.
\newblock \showarticletitle{Using strong triadic closure to characterize ties
  in social networks}. In \bibinfo{booktitle}{\emph{Proceedings of the 20th ACM
  SIGKDD international conference on Knowledge discovery and data mining}}
  \emph{(\bibinfo{series}{KDD '14})}. \bibinfo{pages}{1466--1475}.
\newblock
\urldef\tempurl%
\url{https://doi.org/10.1145/2623330.2623664}
\showURL{%
\tempurl}


\bibitem[Traag et~al\mbox{.}(2019)]%
        {traag2019louvain}
\bibfield{author}{\bibinfo{person}{Vincent~A Traag}, \bibinfo{person}{Ludo
  Waltman}, {and} \bibinfo{person}{Nees~Jan Van~Eck}.}
  \bibinfo{year}{2019}\natexlab{}.
\newblock \showarticletitle{From Louvain to Leiden: guaranteeing well-connected
  communities}.
\newblock \bibinfo{journal}{\emph{Scientific reports}} \bibinfo{volume}{9},
  \bibinfo{number}{1} (\bibinfo{year}{2019}), \bibinfo{pages}{5233}.
\newblock


\bibitem[Traud et~al\mbox{.}(2012)]%
        {facebook}
\bibfield{author}{\bibinfo{person}{Amanda~L Traud}, \bibinfo{person}{Peter~J
  Mucha}, {and} \bibinfo{person}{Mason~A Porter}.}
  \bibinfo{year}{2012}\natexlab{}.
\newblock \showarticletitle{Social structure of facebook networks}.
\newblock \bibinfo{journal}{\emph{Physica A: Statistical Mechanics and its
  Applications}} \bibinfo{volume}{391}, \bibinfo{number}{16}
  (\bibinfo{year}{2012}), \bibinfo{pages}{4165--4180}.
\newblock


\bibitem[van Zuylen and Williamson(2009)]%
        {vanzuylen2009deterministic}
\bibfield{author}{\bibinfo{person}{Anke van Zuylen} {and}
  \bibinfo{person}{David~P. Williamson}.} \bibinfo{year}{2009}\natexlab{}.
\newblock \showarticletitle{Deterministic Pivoting Algorithms for Constrained
  Ranking and Clustering Problems}.
\newblock \bibinfo{journal}{\emph{Mathematics of Operations Research}}
  \bibinfo{volume}{34}, \bibinfo{number}{3} (\bibinfo{year}{2009}),
  \bibinfo{pages}{594--620}.
\newblock
\showISSN{0364765X, 15265471}
\urldef\tempurl%
\url{http://www.jstor.org/stable/40538434}
\showURL{%
\tempurl}


\bibitem[Veldt(2022)]%
        {veldt2022stc}
\bibfield{author}{\bibinfo{person}{Nate Veldt}.}
  \bibinfo{year}{2022}\natexlab{}.
\newblock \showarticletitle{Correlation Clustering via Strong Triadic Closure
  Labeling: Fast Approximation Algorithms and Practical Lower Bounds}. In
  \bibinfo{booktitle}{\emph{International Conference on Machine Learning}}.
  PMLR, \bibinfo{pages}{22060--22083}.
\newblock


\bibitem[Veldt et~al\mbox{.}(2017)]%
        {veldt2017unifying}
\bibfield{author}{\bibinfo{person}{Nate Veldt}, \bibinfo{person}{David Gleich},
  {and} \bibinfo{person}{Anthony Wirth}.} \bibinfo{year}{2017}\natexlab{}.
\newblock \showarticletitle{Unifying sparsest cut, cluster deletion, and
  modularity clustering objectives with correlation clustering}.
\newblock \bibinfo{journal}{\emph{arXiv preprint arXiv:1712.05825}}
  (\bibinfo{year}{2017}).
\newblock


\bibitem[Veldt et~al\mbox{.}(2018)]%
        {veldt2018correlation}
\bibfield{author}{\bibinfo{person}{Nate Veldt}, \bibinfo{person}{David~F
  Gleich}, {and} \bibinfo{person}{Anthony Wirth}.}
  \bibinfo{year}{2018}\natexlab{}.
\newblock \showarticletitle{A correlation clustering framework for community
  detection}. In \bibinfo{booktitle}{\emph{Proceedings of the 2018 World Wide
  Web Conference}}. \bibinfo{pages}{439--448}.
\newblock


\bibitem[Veldt et~al\mbox{.}(2019a)]%
        {Veldt2019learning}
\bibfield{author}{\bibinfo{person}{Nate Veldt}, \bibinfo{person}{David~F.
  Gleich}, {and} \bibinfo{person}{Anthony Wirth}.}
  \bibinfo{year}{2019}\natexlab{a}.
\newblock \showarticletitle{Learning Resolution Parameters for Graph
  Clustering}. In \bibinfo{booktitle}{\emph{Proceedings of the 28th
  International Conference on World Wide Web}} (San Francisco, CA, USA)
  \emph{(\bibinfo{series}{WWW '19})}. \bibinfo{publisher}{International World
  Wide Web Conferences Steering Committee}, \bibinfo{address}{Republic and
  Canton of Geneva, Switzerland}, \bibinfo{numpages}{11}~pages.
\newblock
\showISBNx{978-1-4503-6674-8/19/05}
\urldef\tempurl%
\url{https://doi.org/10.1145/3308558.3313471}
\showDOI{\tempurl}


\bibitem[Veldt et~al\mbox{.}(2019b)]%
        {veldt2019metric}
\bibfield{author}{\bibinfo{person}{Nate Veldt}, \bibinfo{person}{David~F.
  Gleich}, \bibinfo{person}{Anthony Wirth}, {and} \bibinfo{person}{James
  Saunderson}.} \bibinfo{year}{2019}\natexlab{b}.
\newblock \showarticletitle{Metric-Constrained Optimization for Graph
  Clustering Algorithms}.
\newblock \bibinfo{journal}{\emph{SIAM Journal on Mathematics of Data Science}}
  \bibinfo{volume}{1}, \bibinfo{number}{2} (\bibinfo{year}{2019}),
  \bibinfo{pages}{333--355}.
\newblock
\urldef\tempurl%
\url{https://doi.org/10.1137/18M1217152}
\showDOI{\tempurl}
\showeprint{https://doi.org/10.1137/18M1217152}


\bibitem[Veldt et~al\mbox{.}(2020)]%
        {veldt2020parameterized}
\bibfield{author}{\bibinfo{person}{Nate Veldt}, \bibinfo{person}{Anthony
  Wirth}, {and} \bibinfo{person}{David~F Gleich}.}
  \bibinfo{year}{2020}\natexlab{}.
\newblock \showarticletitle{Parameterized correlation clustering in hypergraphs
  and bipartite graphs}. In \bibinfo{booktitle}{\emph{Proceedings of the 26th
  ACM SIGKDD International Conference on Knowledge Discovery \& Data Mining}}.
  \bibinfo{pages}{1868--1876}.
\newblock


\end{thebibliography}

%%
%% If your work has an appendix, this is the place to put it.
%\appendix

\end{document}